\providecommand{\U}[1]{\protect\rule{.1in}{.1in}}
\newtheorem{theorem}{Theorem}
\newtheorem{definition}[theorem]{Definition}
\newenvironment{proof}[1][Proof]{\noindent \textbf{#1.} }{\  \rule{0.5em}{0.5em}}
\begin{document}
	
	\title{\textbf{Robust Rao-type tests for step-stress accelerated life-tests under  interval-monitoring and Weibull lifetime distributions}}
	\author{Narayanaswamy Balakrishnan,  María Jaenada and Leandro Pardo}
	\date{ }
	\maketitle

\begin{abstract}
	Many products in engineering are highly reliable with large mean lifetimes to failure. Performing lifetests under normal operations conditions would thus require long experimentation times and high experimentation costs. Alternatively, accelerated lifetests shorten the experimentation time by running the tests at higher than normal stress conditions, thus inducing more failures. 
	Additionally, a log-linear regression model can be used to relate the lifetime distribution of the product to the level of stress it experiences. After estimating the parameters of this relationship, results can be extrapolated to normal operating conditions.
	On the other hand, censored data is common in reliability analysis. Interval-censored data arise when continuous inspection is difficult or infeasible due to technical or budgetary constraints. 
	In this paper, we develop robust restricted estimators based on the density power divergence for step-stress accelerated life-tests under Weibull distributions with interval-censored data. We present theoretical asymptotic properties of the estimators and develop robust Rao-type test statistics based on the proposed robust estimators for testing composite null hypothesis on the model parameters.
	
	 
\end{abstract}
\section{Introduction}

Inference under censoring is a key concern  in reliability and survival analysis because of its high occurrence in lifetime data.
Censoring occurs when incomplete information is available about the survival time of some devices because it has not been possible to observe the exact time of failure. Instead, either the interval in which it occurred is known (interval censoring), a lower bound of the failure time is known (Censored Type I data or right-censored data) or an upper bound of the failure time is known. 
Because lifetime information is incomplete, inference under censored data complicates the estimation of the lifetime distribution
and it can result in biased estimates of product lifetime characteristics.  Further, different types of censoring require different statistical methods to address them properly, and careful consideration and appropriate handling of censoring are essential to ensure accurate and reliable results. Here, we deal with interval-censored data, which are commonly found in reliability experiment with interval-monitoring  where failure counts are recorded at certain pre-fixed monitoring times, or experiments where functional check requires a manual test.

In engineering reliability analysis we are often interested in inferring the reliability and associated lifetime characteristics of a certain product. From the parametric approach,  a parametric distribution family is assumed for modelling the time until failure. Two of the most common lifetime distributions adaopted in the literature are the exponential and the Weibull distribution, as they are the simplest distributions  characterizing survival/reliability data. The exponential distribution is memoryless, and thus it hazard function is constant with respect to time, which makes reliabililty analysis very simple. The Weibull distribution is a flexible distribution that can model a wide range of failure patterns and includes the exponential distribution as a particular case. Of course, both parametric families depend on model parameters that need to be estimated through observed data. The maximum likelihood estimator (MLE) is generally considered because of it good properties in terms of unbiasness, consistency and asymptotic efficiency. However, the MLE lacks of robustness and so data contamination may heavily affect the estimation.
Recent works on survival analysis and reliability have shown the advantage in terms of robustness, with a small loss of efficiency, of using divergence-based estimators. 
  \cite{BalaJae2022a} developed robust minimum density power divergence (DPD) estimators (MDPDEs) for non-destructive one-shot device data under step-stress model with exponential lifetimes, and later  \cite{BalaJae2022b, BalaJae2023} extended the theory to Weibull and gamma lifetime distributions.
Further, \cite{BalaJae2023a} developed restricted MDPDE (RMDPDE) for the step-stress model with exponential lifetimes, and derived robust Rao-type tests based the restricted estimators.

Besides, as many nowadays products are highly reliable,  it would be highly unlikely to observe many failures on tests under normal operating conditions within a short period of time.  Alternatively, an accelerated life test (ALT) can be used to shorten the reliability of the devices under test by increasing the level of stress to which devices are subjected. Then,  after analyzing the observed data  under high-stress conditions, the inferential results can be extrapolated from high stress conditions to normal operating conditions.
There are different types of ALT plans resulting in different statistical models; in constant-stress ALTs each unit under test is subjected to a (single) stress level, step-stress ALTs increase the stress level to all units under test at certain fixed times and progressive-stress ALTs continuously increase the stress level during the test. 
Continuous inspection of the exact failure times of the devices under test may not be possible in practice due to technical limitations of the product or either may be too expensive. To reduce experimentation cost, failure counts can be collected at some inspection times during the test, resulting in interval-censored data.
This is the case of solar lighting devices, where it may be preferable to examine the life state of the devices under test only at certain inspection times, rather than continuously during the experimentation.
 \cite{Han2014} conducted a simple step-stress test  under a time constraint to assess the reliability of a solar lighting device, and  \cite{Han2019, Han2020} analysed solar light data under interval-monitoring. In those studies, exponential and Weibull lifetime distributions were considered for the failure times.
We focus here on the step-stress model for interval-monitored lifetimes from Weibull distributions.

Since in step stress experiments the stress level is increased, a statistical model relating   the lifetime distribution at a current stress level to the preceding levels is required.
 \cite{Nelson1980} proposed the cumulative exposure model (CE), that states the residual life of a product does only depend on the current cumulative fraction failed and the cumulative distribution function of the current stress.
It has no memory on how wore and tear was accumulated and so the successive failure behaviour is independent of the previous stress levels, except for the cumulative exposure generated until the time of stress change.
Thus, for any two assumed lifetime distributions $F_1(t)$ and $F_2(t)$ at stress levels $x_1$ and $x_2,$ respectively,
 the  effect on the lifetime distribution of increasing  the stress from a level $x_1$ to $x_2$ at a fixed time $\tau,$  is mathematically shown as a translation of the distribution function  from $F_2(t)$ to $F_2(t - h),$ where the shifting time $h$ is such that the distribution is continuous at the time of stress change,
$F_2(\tau + h) = F_1(\tau).$
The shifting time $h$ features the accumulated damage at the preceding stress level. If the assumed lifetime distributions $F_1(t)$ and $F_2(t)$  belong to the same parametric family with scale parameter depending on the stress level, $\theta,$ we readily find that the shifting time $h$  must satisfy the relation
\begin{equation}\label{eq:shifting}
	h = \frac{\theta_2}{\theta_1}\tau - \tau.
\end{equation}
%
The CE is one of the most prominent and well-established cumulative damage model for step-stress modelling and so we will adopt this approach here for relating lifetime distributions at different steps of the stress-experiments.

Finally, in many practical situations it may be of interest to reduce the parameter space to some values satisfying a general constraint of the form
\begin{equation} \label{eq:restrictedspace}
	\Theta_0 = \{\boldsymbol{\theta} : \boldsymbol{m}(\boldsymbol{\theta}) = \boldsymbol{0}_r \},
\end{equation}
where $\boldsymbol{m} : \Theta \rightarrow \mathbb{R}^r$ ($r\leq 3$) is a differentiable function and its derivative
$\boldsymbol{M}(\boldsymbol{\theta}) = \frac{\partial  \boldsymbol{m}^T(\boldsymbol{\theta}) }{\partial \boldsymbol{\theta}} $
 exists and is continuous in $\boldsymbol{\theta}$ and has dim($\boldsymbol{M}(\boldsymbol{\theta})$) $= 3 \times r$ and rank($\boldsymbol{M}(\boldsymbol{\theta})$)$ = r$.
For example, some test statistics such as Rao tests, Lagrange multipliers tests or likelihood ratio test statistics require of restricted estimators.
 \cite{Basu2018} presented the RMDPDE in general statistical models and defined robust divergence-based tests based on this restricted estimator for testing composite null hypotheses.
In this paper we consider the RMDPDE for the step-stress ALT model with interval-censored data and we develop robust Rao-type test statistics based on the RMDPDEs.
 
The rest of paper is divided as follows: 
Section \ref{sec:Weibull} presents the Weibull lifetime distribution under step-stress testing. 
Section \ref{sec:MDPDE} introduces the RMDPDE for step-stress ALT under Weibull lifetime distributions and interval-censored data. 
Section \ref{sec:IF} theoretically analyzes the robustness of the RMDPDE through the derivation of its influence function (IF).
In Section \ref{sec:Rao}, a robust generalization of the Rao test based on the RMDPDE is presented, and explicit expressions for the test under composite null hypothesis are derived. The robustness properties and efficiency of the RMDPDE and corresponding Rao-type tests are empirically examined in Section \ref{sec:simulatio} through Monte Carlo simulation based on a real-life experiment on solar lighting devices.

\section{The Weibull lifetime distribution \label{sec:Weibull} }

The Weibull distribution is widely used to model lifetime data, such as failure times of devices, survival times of patients or durations of events. 
For example, in reliability engineering it is frequently used for analyzing the reliability and maintainability of products or systems.
It is a simple distribution but can capture different types of failure behaviours such as constant, increasing or decreasing failure rates.

The Weibull  distribution is defined by two distribution parameters; a shape parameter $\eta$ determining the curvature of the distribution and a scale parameter $\lambda$ determining its location and spread. 
The Weibull cumulative distribution function with parameters $(\eta, \lambda)$ is given by
$$ F(t) = 1-\exp\left(-\left(\frac{t}{\lambda} \right)^\eta\right).$$
Note that the Weibull family includes the exponential distribution for $\eta=1$, which is a more simple distribution suitable for modelling lifetime data with constant failure rates.
We will assume that the scale parameter $\lambda$ of the Weibull distribution depends on the stress level $x$ to which unit are subjected trough a log-linear relationship of the form
$$ \lambda = \exp(a_0 + a_1 x), \hspace{0.3cm} a_0 \in \mathbb{R}, a_1 \in \mathbb{R}^{-}. $$
The above log-linear relationship accommodates some physical models such as the Arrhenius law for the temperature dependence of reaction rates or the Inverse Power law  for  non-thermal accelerated stresses in mechanical systems.
Moreover, assuming a link function relating the scale parameter with the stress avoids the estimation of different scale parameters, as it suffices to estimate only two parameters, $a_0$ and $a_1,$ regardless of the number of stress changes considered in the experiment. 
The shape parameter $\eta$ is assumed to be constant and independent of the stress level. This homogeneity assumption is a very convenient for CE modelling, as the shifting time at each change of stress has a explicit expression given in Equation (\ref{eq:shifting}).
We denote $\boldsymbol{\theta} = ( a_0,a_1, \eta)^T \in \Theta = \mathbb{R}\times \mathbb{R}^{-}\times \mathbb{R}^+$ the vector of model parameters to be estimated.

We now discuss the step-stress model under the Weibull distribution. Let us consider a step-stress experiment with $k$ stress levels, $x_1 < \cdots < x_k,$ increased at the fixed times of stress change $\tau_1 < \cdots < \tau_{k-1},$ and let $\tau_k$ denote the experiment termination time. The proposed model is Type I-censored because the end of the experiment is fixed in advance.
Further, let us assume that the random variable describing the time until failure under a constant stress follows a Weibull distribution with  shape parameter $\eta$ and scale parameter $\lambda$ depending on the stress level.
Under the CE model, the  Weibull cumulative distribution function of the lifetime, $T,$ is given by 
$$ F_T (t) = F_{i}(t + h_{i-1}) = 1-\exp\left(-\left( \frac{t + h_{i-1} }{\lambda_i}\right) ^\eta\right), \hspace{0.3cm} i=1,...,k$$
where $h_{i-1}$ denotes the shifting time at time of stress change $\tau_i,$ and $h_0$ is defined as $h_0=0$ for notational convenience.


Great efforts have been made to model the relationships between hazard and environmental stress conditions.
The hazard function of a Weibull distributions with shape and scale parameters $\eta$ and $\lambda$, respectively, is given by
 $$ h(t) = \frac{\eta}{\lambda} \left(\frac{t}{\lambda}\right)^{\eta - 1} \hspace{.3in} t \ge 0; \eta > 0.$$  Then,  the hazard function of a Weibull distribution is increasing or decreasing depending on it shape parameter.
 Consequently, the hazard function of the lifetime $T$ is a piecewise function of  Weibull hazards, given by
 $$ h_T(t) = h_i(t+ h_{i-1}) = \frac{\eta}{\lambda_i} \left(\frac{t+ h_{i-1}}{\lambda_i}\right)^{\eta - 1} \hspace{.3in} t \ge 0; \eta > 0.$$ 
 When  $\eta > 1$, the hazard function is increasing, indicating an increasing failure rate over time, such as those caused by wear-out or ageing. 
 Conversely, when $\eta< 1$ the hazard function is decreasing, indicating a decreasing failure rate over time, adequate for modelling failures caused by infant mortality or learning effects.
Additionally, once the lifetime distribution is known, we can compute some lifetime characteristics of interest such as the mean time to failure (MTTF) of the product, given for Weibull distributions as $$MTTF = \lambda \cdot \Gamma\left(1+\frac{1}{\eta}\right).$$
In some reliability analyses, these lifetime characteristics are the target of interest, rather than the explicit expression of the reliability function.

In interval-monitored experiments exact failure times are not available, but rather failure counts are be recorded at some fixed inspection times. 
This set of inspections times usually includes all times of stress change, but additional intermediate inspection times are also possible.
The grid of inspection times, including all the times of stress change $\tau_1,...,\tau_k,$ is denoted by $0 = t_0 < \cdots < t_L,$ with $L$ the number of inspection during the experiment. With the previous notation, the probability of failure within the $j-$interval, $(t_{j-1}, t_j]$ is given by
\begin{equation} \label{eq:thprob}
	\pi_j(\boldsymbol{\theta}) = F_T (t_j) - F_T (t_{j-1})
\end{equation}
and it would depend on the stress to which units are subject between two subsequent inspection times, $t_{j-1}$ and $t_j.$ Since all times of stress change are inspections times, we can assume that the stress level is constant within an interval.

The observed data are counts of events and so can be modelled as a multinomial sample with $L+1$ events or categories; the first $L$ events stand for failing at each of the pre-defined intervals and the $L+1$-th event corresponds to  surviving at the end of the experiment. 
The theoretical probabilities associated with these events are  defined in Equation (\ref{eq:thprob}) for $j=1,...,L$ and $\pi_{L+1}(\boldsymbol{\theta}) = 1- F_T (t_L).$ Then,  the log-likelihood function of the step-stress model with interval-monitoring, given the grouped sample of failure counts, $(n_1,...,n_{L+1}),$ is given by
\begin{equation}\label{eq:loglikelihood}
	\begin{aligned}
		L(\boldsymbol{\theta} ; n_1,...,n_{L+1}) &= \log\left(\frac{(\sum_{j=1}^{L+1}n_j)!}{\prod_{j=1}^{L+1}n_j!} \cdot \prod_{j=1}^{L+1} \pi_j(\boldsymbol{\theta})^{n_j}\right)\\
		&= \log\left((\sum_{j=1}^{L+1}n_j)!\right) - \sum_{j=1}^{L+1}\log(n_j!) + \sum_{j=1}^{L+1} {n_j}\log(\pi_j(\boldsymbol{\theta})).
	\end{aligned}
\end{equation}
The MLE is defined as the maximizer (or minimizer) of the likelihood (or negative likelihood) function.
As the likelihood  function only depends on the model parameters in the last term,  the MLE can be practically computed as
$$\widehat{\boldsymbol{\theta}}_{MLE}= \operatorname{arg} \operatorname{ min} \left[-\sum_{j=1}^{L+1} n_j \log(\pi_j(\boldsymbol{\theta}))\right].$$
The MLE enjoys good asymptotic properties and it has been widely studied in the literature of interval-monitored step-stress ALTs. Also, EM-algorithms have been developing for efficiently estimating this MLE. However, the MLE has poor robustness properties and can lead to a substantial degradation in the performance of classical tests under data contamination and model misspecification.

\section{Minimum restricted density power divergence estimator  \label{sec:MDPDE}}

Restricted estimators can be used in a variety of statistical applications. 
For example, they can be used in multivariate measurement error regression models when some variables cannot be measured accurately or  for the Poisson regression model in the presence of multicollinearity.
RMDPDE have been also applied in testing hypotheses for their good  robustness properties. 
In this section, we develop robust inferential DPD-based techniques for interval-monitoring step-stress ALT experiments under Weibull lifetime distributions  under general equality restrictions.

We consider the same experimental set-up defined in Section \ref{sec:Weibull}, with $k$ stress levels and $L$ inspection times. 
Using the same notation, $n_j$ denotes the number of failures within the interval $(t_{j-1}, t_j],$ $j=1,...,L$ and $n_{L+1}$ is the number of surviving devices after the end of the experiment. As the observed data follows a multinomial model with  $N = \sum_{j=1}^{L+1} n_j$ trails (units under test) and theoretical probability vector $\boldsymbol{\pi}(\boldsymbol{\theta}) = \left(\pi_{1}(\boldsymbol{\theta}), ..., \pi_{L+1}(\boldsymbol{\theta}) \right)$ with $\pi_{j}(\boldsymbol{\theta})$ defined as in Equation (\ref{eq:thprob}), an estimator of the theoretical probability vector is naturally defined by the empirical vector of failure frequencies, $\widehat{\boldsymbol{p}} = (\widehat{p}_1,..., \widehat{p}_{L+1})$ with $\widehat{p}_j = \frac{n_j}{N}.$ 

For discrete models such as the multinomial distribution, divergence-based estimators are defined as minimizers of a suitable divergence between the empirical and theoretical probability vectors. The DPD between $\widehat{\boldsymbol{p}}$ and $\boldsymbol{\pi}(\boldsymbol{\theta})$ is given, for $\beta >0,$ by
\begin{equation}\label{eq:DPDloss}
	d_{\beta}\left( \widehat{\boldsymbol{p}},\boldsymbol{\pi}\left(\boldsymbol{\theta}\right)\right)   = \sum_{j=1}^{L+1} \left(\pi_j(\boldsymbol{\theta})^{1+\beta} -\left( 1+\frac{1}{\beta}\right) \widehat{p}_j\pi_j(\boldsymbol{\theta})^{\beta}  +\frac{1}{\beta} \widehat{p}_j^{\beta+1} \right).
\end{equation}
and the MDPDE is computed as the minimizer of Equation (\ref{eq:DPDloss}). 
Our choice of DPD is motivated by the fact that DPD-based estimators and tests have demonstrated good robustness performance without a significant loss of efficiency across many statistical models.
The non-negative tuning parameter $\beta$ controls the trade off between robustness and asymptotic efficiency of the parameter estimates.
For $\beta=0,$ the DPD can be defined by taking continuos limits  $\beta \rightarrow 0,$ and the resulting expression coincides with the well-known  Kullback-Leibler divergence between the empirical and theoretical probability vectors,
$$d_{0} (\boldsymbol{\widehat{p}}, \boldsymbol{\pi}(\boldsymbol{\theta})) = \sum_{j=1}^{L+1}\widehat{p}_j \log\left(\frac{\widehat{p}_j}{\pi_j(\boldsymbol{\theta})}\right).$$
Note that the objective function based on the Kullback-Leibler divergence reaches its minimum at the same point that the negative-loglikelihood given in Equation (\ref{eq:loglikelihood}). Then, the minimum Kullback-Leibler divergence estimator coincides with the MLE and it justifies the classical likelihood-based estimator from the information theory approach.
Further, the DPD family  generalizes the likelihood procedure to a broader  class of estimators, including the  classical MLE as a special case.

If we consider a restricted parameter space of the form
$$\Theta_0 = \{\boldsymbol{\theta} : \boldsymbol{m}(\boldsymbol{\theta}) = \boldsymbol{0}_r \},$$
where $\boldsymbol{m} : \Theta \rightarrow \mathbb{R}^r,$ $r\leq 3,$ is a differentiable function,
we can readily defined the RMDPDE as
\begin{equation}\label{eq:RMDPDE}
			\widetilde{\boldsymbol{\theta}}_\beta = \operatorname{arg } \operatorname{min}_{\boldsymbol{\theta} \in \Theta_0} d_{\beta}\left( \widehat{\boldsymbol{p}},\boldsymbol{\pi}\left(\boldsymbol{\theta}\right)\right) .
\end{equation}

The above constrained minimization problem can be solved using Lagrange multipliers. Naturally, the asymptotic properties of the resulting estimators and the estimating equations would be affected by the imposed constraints. 
The next theorem states the estimating equations of the RMDPDE.

\begin{theorem}\label{thm:estimatingequations}
	The RMDPDE of the interval-censored step-stress ALT model under Weibull lifetime distributions must satisfy the following system of $3+ r$ equations
	\begin{align}
		\boldsymbol{W}^T(\boldsymbol{\theta}) \boldsymbol{D}^{\beta-1}_{\boldsymbol{\pi}(\boldsymbol{\theta})} \left(\widehat{\boldsymbol{p}}- \boldsymbol{\pi}(\boldsymbol{\theta}) \right) + \boldsymbol{M}(\boldsymbol{\theta}) \boldsymbol{\lambda}_\beta &= \boldsymbol{0}_3 \label{eq:estimatingMDPDE} \\
		\boldsymbol{m}(\boldsymbol{\theta}) &= \boldsymbol{0}_r	\label{eq:estimatingRMDPDE}
	\end{align}
	where $\boldsymbol{D}_{\boldsymbol{\pi}(\boldsymbol{\theta})}$ denotes a $(L+1)\times(L+1)$ diagonal matrix with diagonal entries $\pi_j(\boldsymbol{\theta}),$ $j=1,...,L+1,$ and $\boldsymbol{W}(\boldsymbol{\theta})$ is a $(L+1) \times 3$ matrix with rows
	$
	\boldsymbol{w}_j =  \boldsymbol{z}_j-\boldsymbol{z}_{j-1},
	$
	where
	\begin{align}
		\label{eq:zj} \boldsymbol{z}_j  &= g_T(t_j)\begin{pmatrix}
			-(t_j+h_{i-1})\\
			-(t_j+h_{i-1})x_i + h_{i-1}^\ast \\
			\log\left(\frac{t_j+h_{i-1}}{\lambda_i}\right)\frac{t_j+h_{i-1}}{\eta}
		\end{pmatrix}, \hspace{0.3cm} j = 1,...,L\\ 
		h_{i}^\ast &= h_{i}x_{i+1}+ \alpha_{i+1}\sum_{k=0}^{i-1}\left(\frac{x_{i+1-k}}{\lambda_{i+1-k}} - \frac{x_{i-k}}{\lambda_{i-k}} \right)\tau_{i-k} \label{aast},\hspace{0.3cm} i=1,...,k-1.
	\end{align}
	 $ \boldsymbol{z}_{-1} = \boldsymbol{z}_{L+1} = \boldsymbol{0},$ $i$ is the stress level at which the units are tested before the $j-$th inspection time,
	and $\boldsymbol{\lambda}_\beta$ denotes the $r$-dimensional vector of Lagrange multipliers.
\end{theorem}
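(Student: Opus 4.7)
The argument decomposes naturally into two pieces: (i) a Lagrangian reduction that converts the constrained minimization (\ref{eq:RMDPDE}) into an estimating system, and (ii) the explicit evaluation of the Jacobian rows $\boldsymbol{w}_j$ under the Weibull CE model, which is where the bulk of the computation lies.

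I would first form the Lagrangian $\mathcal{L}(\boldsymbol{\theta},\boldsymbol{\lambda}) = d_\beta(\widehat{\boldsymbol{p}},\boldsymbol{\pi}(\boldsymbol{\theta})) + \boldsymbol{\lambda}^T\boldsymbol{m}(\boldsymbol{\theta})$ and impose first-order optimality in $\boldsymbol{\theta}$ and $\boldsymbol{\lambda}$. The latter immediately returns (\ref{eq:estimatingRMDPDE}). Differentiating (\ref{eq:DPDloss}) term by term, the $\widehat{p}_j^{\beta+1}$ pieces vanish and the remaining two combine to give
$$\frac{\partial d_\beta(\widehat{\boldsymbol{p}},\boldsymbol{\pi}(\boldsymbol{\theta}))}{\partial\boldsymbol{\theta}} = -(1+\beta)\sum_{j=1}^{L+1}\pi_j(\boldsymbol{\theta})^{\beta-1}(\widehat{p}_j-\pi_j(\boldsymbol{\theta}))\,\frac{\partial\pi_j(\boldsymbol{\theta})}{\partial\boldsymbol{\theta}}.$$
Stacking the rows $\boldsymbol{w}_j=\partial\pi_j/\partial\boldsymbol{\theta}$ into $\boldsymbol{W}(\boldsymbol{\theta})$ and writing $\boldsymbol{D}_{\boldsymbol{\pi}(\boldsymbol{\theta})}$ for the diagonal probability matrix, stationarity in $\boldsymbol{\theta}$ becomes $-(1+\beta)\boldsymbol{W}^T(\boldsymbol{\theta})\boldsymbol{D}^{\beta-1}_{\boldsymbol{\pi}(\boldsymbol{\theta})}(\widehat{\boldsymbol{p}}-\boldsymbol{\pi}(\boldsymbol{\theta})) + \boldsymbol{M}(\boldsymbol{\theta})\boldsymbol{\lambda} = \boldsymbol{0}_3$, and absorbing the scalar $-(1+\beta)$ into a rescaled multiplier $\boldsymbol{\lambda}_\beta$ delivers (\ref{eq:estimatingMDPDE}).

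It remains to identify the rows $\boldsymbol{w}_j$. Since $\pi_j = F_T(t_j)-F_T(t_{j-1})$ for $j\le L$ and $\pi_{L+1}=1-F_T(t_L)$, it suffices to compute $\boldsymbol{z}_j := \partial F_T(t_j)/\partial\boldsymbol{\theta}$; together with the boundary conventions $\boldsymbol{z}_0 = \boldsymbol{z}_{L+1} = \boldsymbol{0}$ (the former from $F_T(0)=0$, the latter so that $\boldsymbol{w}_{L+1}=-\boldsymbol{z}_L$ matches the survival cell), this gives the telescoping form $\boldsymbol{w}_j=\boldsymbol{z}_j-\boldsymbol{z}_{j-1}$. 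For $t_j$ lying in the $i$-th stress interval, $F_T(t_j) = 1-\exp(-((t_j+h_{i-1})/\lambda_i)^\eta)$, so the chain rule factors each coordinate of $\boldsymbol{z}_j$ as $g_T(t_j)$ times a bracket depending on the relevant partials of $\lambda_i$ and $h_{i-1}$.

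The $a_0$- and $\eta$-coordinates are essentially free: because $\lambda_{k+1}/\lambda_k=\exp(a_1(x_{k+1}-x_k))$ is independent of $a_0$, the CE recursion $h_i=(\tau_i+h_{i-1})\lambda_{i+1}/\lambda_i-\tau_i$ forces $\partial h_{i-1}/\partial a_0 \equiv 0$ by induction, and taking $\eta$-th roots in the continuity identity $F_{i+1}(\tau_i+h_i)=F_i(\tau_i+h_{i-1})$ shows the same recursion is free of $\eta$. These two observations produce the first and third entries of $\boldsymbol{z}_j$ in (\ref{eq:zj}) at once. The genuine obstacle is the $a_1$-coordinate, for which $\partial h_{i-1}/\partial a_1 \ne 0$: here I would differentiate the CE recursion in $a_1$, solve the resulting first-order linear recursion in $k$, and telescope the accumulated terms to obtain the closed-form expression $h_{i-1}^*$ in (\ref{aast}). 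Substituting this into the chain-rule expression produces the middle entry of $\boldsymbol{z}_j$ and completes the proof.
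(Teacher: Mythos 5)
Your proposal is correct and follows essentially the same route the paper takes implicitly: the paper states the theorem without a written proof, resting on the Lagrangian first-order conditions for the constrained problem (\ref{eq:RMDPDE}) together with the unrestricted $\beta$-score $\boldsymbol{W}^T(\boldsymbol{\theta})\boldsymbol{D}^{\beta-1}_{\boldsymbol{\pi}(\boldsymbol{\theta})}(\widehat{\boldsymbol{p}}-\boldsymbol{\pi}(\boldsymbol{\theta}))$ already derived in \cite{BalaJae2022b}, which is exactly your decomposition except that you recompute the rows $\boldsymbol{w}_j$ from the cumulative exposure recursion rather than citing them. Your differentiation of the DPD loss, the absorption of the factor $-(1+\beta)$ into the rescaled multiplier $\boldsymbol{\lambda}_\beta$, the induction showing $h_{i-1}$ is free of $a_0$ and $\eta$, and the boundary convention $\boldsymbol{z}_0=\boldsymbol{z}_{L+1}=\boldsymbol{0}$ (the theorem's $\boldsymbol{z}_{-1}$ is evidently a typo for $\boldsymbol{z}_0$) are all sound.
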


Setting $\boldsymbol{\lambda}_\beta = \boldsymbol{0}_r$ in Equation (\ref{eq:estimatingMDPDE}) yields the same estimating equations of the (unrestricted) MDPDE obtained in \cite{BalaJae2022b}, and the function 
\begin{equation} \label{eq:score}
	\boldsymbol{U}_\beta(\boldsymbol{\theta}) = \boldsymbol{W}^T \boldsymbol{D}^{\beta-1}_{\boldsymbol{\pi}(\boldsymbol{\theta})} \left(\widehat{\boldsymbol{p}}- \boldsymbol{\pi}(\boldsymbol{\theta})\right) 
\end{equation}
 is called the $\beta$-score function of the model.
 Additionally, if the matrix $\boldsymbol{M}(\boldsymbol{\theta})$ is invertible, the vector of Lagrange multipliers can be obtained explicitly as a function of $\boldsymbol{\theta}$ by solving  Equation (\ref{eq:estimatingMDPDE}) in $\boldsymbol{\lambda}_\beta.$ However, in many situations the matrix $\boldsymbol{M}(\boldsymbol{\theta})$ will not be invertible and  the vector of Lagrange multipliers will have to be computed numerically.

The next theorem establishes the main asymptotic properties of the RMDPDE under the step-stress set-up.

\begin{theorem}
	Assume that all defined matrices exists finitely and suppose  that the true distribution belongs to the model with true parameter $\boldsymbol{\theta}_0 \in \Theta.$ 
	Then, the RMDPDE of the interval-censored step-stress ALT model under Weibull lifetime distributions satisfying the constraints $\boldsymbol{m}(\boldsymbol{\theta}) = \boldsymbol{0}_r$ has the following properties.
	\begin{enumerate}
		\item The RMDPDE estimating equation has a consistent sequence of roots $\widetilde{\boldsymbol{\theta}}_\beta$ such that $\widetilde{\boldsymbol{\theta}}_\beta \xrightarrow{P} \boldsymbol{\theta}_0.$
		\item The null distribution of the RMDPDE is given by
			$$\sqrt{N}\left(\widetilde{\boldsymbol{\theta}}_\beta-\boldsymbol{\theta}_0 \right) \xrightarrow[N \rightarrow \infty]{L}\mathcal{N}\left(\boldsymbol{0}, \boldsymbol{\Sigma}_\beta(\boldsymbol{\theta}_0)\right)$$
		where 
		\begin{equation}\label{eq:SigmaPQmatrices}
			\begin{aligned}
				\boldsymbol{\Sigma}_\beta(\boldsymbol{\theta}_0) &= \boldsymbol{P}_\beta(\boldsymbol{\theta}_0) \boldsymbol{K}_\beta(\boldsymbol{\theta}_0) \boldsymbol{P}_\beta(\boldsymbol{\theta}_0)^T,\\
				\boldsymbol{P}_\beta(\boldsymbol{\theta}_0) &=  \boldsymbol{J}_\beta(\boldsymbol{\theta}_0)^{-1}- \boldsymbol{Q}_\beta(\boldsymbol{\theta}_0)\boldsymbol{M}^T(\boldsymbol{\theta}_0) \boldsymbol{J}_\beta(\boldsymbol{\theta}_0)^{-1},\\
				\boldsymbol{Q}_\beta(\boldsymbol{\theta}_0) &=  \boldsymbol{J}_\beta(\boldsymbol{\theta}_0)^{-1}\boldsymbol{M}(\boldsymbol{\theta}_0)[ \boldsymbol{M}^T(\boldsymbol{\theta}_0) \boldsymbol{J}_\beta(\boldsymbol{\theta}_0)^{-1}\boldsymbol{M}(\boldsymbol{\theta}_0)]^{-1},
			\end{aligned}
		\end{equation} 
		with \begin{equation} \label{eq:JK}
			\boldsymbol{J}_\beta(\boldsymbol{\theta}_0) = \boldsymbol{W}^T \boldsymbol{D}_{\boldsymbol{\pi}(\boldsymbol{\theta_0})}^{\beta-1} \boldsymbol{W},
			\hspace{0.3cm}  \hspace{0.3cm}
			\boldsymbol{K}_\beta(\boldsymbol{\theta}_0) = \boldsymbol{W}^T \left(\boldsymbol{D}_{\boldsymbol{\pi}(\boldsymbol{\theta_0})}^{2\beta-1}-\boldsymbol{\pi}(\boldsymbol{\theta}_0)^{\beta}\boldsymbol{\pi}(\boldsymbol{\theta}_0)^{\beta T}\right) \boldsymbol{W},
		\end{equation}
		$\boldsymbol{D}_{\boldsymbol{\pi}(\boldsymbol{\theta_0})}$ denotes the diagonal matrix with entries $\pi_j(\boldsymbol{\theta_0}),$ $j=1,...,L+1,$ and $\boldsymbol{\pi}(\boldsymbol{\theta}_0)^{\beta}$ denotes the vector with components $\pi_j(\boldsymbol{\theta}_0)^{\beta}.$
	\end{enumerate}
\end{theorem}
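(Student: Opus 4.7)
The plan is to argue consistency first by standard M-estimator theory applied to the constrained minimization in Equation (\ref{eq:RMDPDE}), and then to derive the asymptotic distribution by linearising the restricted estimating equations of Theorem \ref{thm:estimatingequations} around the true value $\boldsymbol{\theta}_0$ and inverting the resulting block system.

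For consistency, I would view $d_\beta(\widehat{\boldsymbol{p}},\boldsymbol{\pi}(\boldsymbol{\theta}))$ as the empirical criterion converging in probability, uniformly on compact subsets of $\Theta$, to the deterministic criterion $d_\beta(\boldsymbol{\pi}(\boldsymbol{\theta}_0),\boldsymbol{\pi}(\boldsymbol{\theta}))$, since $\widehat{\boldsymbol{p}}\xrightarrow{P}\boldsymbol{\pi}(\boldsymbol{\theta}_0)$ by the weak law and $\boldsymbol{\pi}(\boldsymbol{\theta})$ is smooth in $\boldsymbol{\theta}$. As $\boldsymbol{\theta}_0 \in \Theta_0$ by hypothesis and is the unique minimiser on $\Theta_0$ of the limiting criterion by identifiability of the model, the argmin consistency theorem yields $\widetilde{\boldsymbol{\theta}}_\beta\xrightarrow{P}\boldsymbol{\theta}_0$; this produces a consistent root of Equations (\ref{eq:estimatingMDPDE})--(\ref{eq:estimatingRMDPDE}) together with a bounded sequence $\boldsymbol{\lambda}_\beta$ of Lagrange multipliers.

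For asymptotic normality, I would Taylor expand both estimating equations at $\boldsymbol{\theta}_0$. Using $\widehat{\boldsymbol{p}}-\boldsymbol{\pi}(\boldsymbol{\theta}_0)=O_P(N^{-1/2})$ and the fact that $\partial \boldsymbol{\pi}(\boldsymbol{\theta})/\partial\boldsymbol{\theta}=\boldsymbol{W}(\boldsymbol{\theta})$, the derivative of the $\beta$-score $\boldsymbol{U}_\beta(\boldsymbol{\theta})$ of Equation (\ref{eq:score}) at $\boldsymbol{\theta}_0$ equals $-\boldsymbol{J}_\beta(\boldsymbol{\theta}_0)+o_P(1)$, since the term arising from differentiating $\boldsymbol{D}^{\beta-1}_{\boldsymbol{\pi}(\boldsymbol{\theta})}$ is multiplied by $\widehat{\boldsymbol{p}}-\boldsymbol{\pi}(\boldsymbol{\theta}_0)=o_P(1)$. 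Expanding $\boldsymbol{m}(\widetilde{\boldsymbol{\theta}}_\beta)=\boldsymbol{0}_r$ gives $\boldsymbol{M}^T(\boldsymbol{\theta}_0)(\widetilde{\boldsymbol{\theta}}_\beta-\boldsymbol{\theta}_0)=o_P(N^{-1/2})$. Stacking both relations produces the $(3+r)\times (3+r)$ linear system
\begin{equation*}
\begin{pmatrix} \boldsymbol{J}_\beta(\boldsymbol{\theta}_0) & -\boldsymbol{M}(\boldsymbol{\theta}_0)\\ \boldsymbol{M}^T(\boldsymbol{\theta}_0) & \boldsymbol{0}\end{pmatrix} \begin{pmatrix} \sqrt{N}(\widetilde{\boldsymbol{\theta}}_\beta-\boldsymbol{\theta}_0)\\ \sqrt{N}\,\boldsymbol{\lambda}_\beta\end{pmatrix} = \begin{pmatrix}\sqrt{N}\,\boldsymbol{U}_\beta(\boldsymbol{\theta}_0)\\ \boldsymbol{0}\end{pmatrix} + o_P(1),
\end{equation*}
and block inversion using the Schur complement of $\boldsymbol{J}_\beta$ yields exactly $\sqrt{N}(\widetilde{\boldsymbol{\theta}}_\beta-\boldsymbol{\theta}_0)=\boldsymbol{P}_\beta(\boldsymbol{\theta}_0)\sqrt{N}\,\boldsymbol{U}_\beta(\boldsymbol{\theta}_0)+o_P(1)$, with $\boldsymbol{P}_\beta$ as in Equation (\ref{eq:SigmaPQmatrices}).

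The conclusion then follows from the multivariate CLT for $\sqrt{N}(\widehat{\boldsymbol{p}}-\boldsymbol{\pi}(\boldsymbol{\theta}_0))$, which converges in law to $\mathcal{N}(\boldsymbol{0},\boldsymbol{D}_{\boldsymbol{\pi}(\boldsymbol{\theta}_0)}-\boldsymbol{\pi}(\boldsymbol{\theta}_0)\boldsymbol{\pi}(\boldsymbol{\theta}_0)^T)$; applying the linear map $\boldsymbol{W}^T\boldsymbol{D}^{\beta-1}_{\boldsymbol{\pi}(\boldsymbol{\theta}_0)}$ gives $\sqrt{N}\,\boldsymbol{U}_\beta(\boldsymbol{\theta}_0)\xrightarrow{L}\mathcal{N}(\boldsymbol{0},\boldsymbol{K}_\beta(\boldsymbol{\theta}_0))$, since $\boldsymbol{D}^{\beta-1}_{\boldsymbol{\pi}}\boldsymbol{D}_{\boldsymbol{\pi}}\boldsymbol{D}^{\beta-1}_{\boldsymbol{\pi}}=\boldsymbol{D}^{2\beta-1}_{\boldsymbol{\pi}}$ and $\boldsymbol{D}^{\beta-1}_{\boldsymbol{\pi}}\boldsymbol{\pi}=\boldsymbol{\pi}^\beta$ componentwise, matching Equation (\ref{eq:JK}). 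Combining with the previous display gives $\sqrt{N}(\widetilde{\boldsymbol{\theta}}_\beta-\boldsymbol{\theta}_0)\xrightarrow{L}\mathcal{N}(\boldsymbol{0},\boldsymbol{P}_\beta\boldsymbol{K}_\beta\boldsymbol{P}_\beta^T)$ as claimed. I expect the main technical obstacle to be a careful justification of the linearisation step and, in particular, the invertibility of $\boldsymbol{J}_\beta(\boldsymbol{\theta}_0)$ and of the Schur complement $\boldsymbol{M}^T\boldsymbol{J}_\beta^{-1}\boldsymbol{M}$, which rely on the rank condition on $\boldsymbol{M}(\boldsymbol{\theta})$ stated in Equation (\ref{eq:restrictedspace}) together with the positive definiteness of $\boldsymbol{J}_\beta$ guaranteed by the assumption that the matrices exist finitely and the structure of $\boldsymbol{W}(\boldsymbol{\theta}_0)$.
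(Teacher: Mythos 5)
Your proposal is correct, but it takes a genuinely different route from the paper. The paper's proof is essentially a citation-plus-specialization argument: it invokes the general asymptotic theory for restricted minimum DPD estimators from \cite{Basu2022} (consistency and the sandwich form $\boldsymbol{P}_\beta\boldsymbol{K}_\beta\boldsymbol{P}_\beta^T$ are taken as given for general statistical models), and the only model-specific work is computing $\boldsymbol{u}_j(\boldsymbol{\theta})=\boldsymbol{w}_j/\pi_j(\boldsymbol{\theta})$ and showing that the generic $\boldsymbol{J}_\beta$ and $\boldsymbol{K}_\beta$ reduce to $\boldsymbol{W}^T\boldsymbol{D}_{\boldsymbol{\pi}}^{\beta-1}\boldsymbol{W}$ and $\boldsymbol{W}^T(\boldsymbol{D}_{\boldsymbol{\pi}}^{2\beta-1}-\boldsymbol{\pi}^\beta\boldsymbol{\pi}^{\beta T})\boldsymbol{W}$. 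You instead rebuild the general result from first principles: argmin consistency for the constrained M-estimation problem, Taylor expansion of the restricted estimating equations, the stacked KKT system, and block inversion giving $\sqrt{N}(\widetilde{\boldsymbol{\theta}}_\beta-\boldsymbol{\theta}_0)=\boldsymbol{P}_\beta(\boldsymbol{\theta}_0)\sqrt{N}\boldsymbol{U}_\beta(\boldsymbol{\theta}_0)+o_P(1)$, combined with the multinomial CLT for $\sqrt{N}(\widehat{\boldsymbol{p}}-\boldsymbol{\pi}(\boldsymbol{\theta}_0))$. Your linearization and inversion steps check out (and are in fact the same machinery the paper deploys later, in its proof of the asymptotic distribution of the Rao-type statistic, to obtain the law of the Lagrange multipliers); your approach buys a self-contained argument at the cost of having to assume identifiability of $\boldsymbol{\theta}_0$ on $\Theta_0$ and the invertibility of $\boldsymbol{J}_\beta(\boldsymbol{\theta}_0)$ and of $\boldsymbol{M}^T\boldsymbol{J}_\beta^{-1}\boldsymbol{M}$ explicitly, conditions the paper leaves implicit in the citation. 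The two routes yield identical expressions for $\boldsymbol{\Sigma}_\beta$, $\boldsymbol{P}_\beta$, $\boldsymbol{Q}_\beta$, $\boldsymbol{J}_\beta$ and $\boldsymbol{K}_\beta$.
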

\begin{proof}
		Following \cite{Basu2022}, the asymptotic distribution of the RMDPDE for general statistical models is given by
			$$\sqrt{N}\left(\widetilde{\boldsymbol{\theta}}_\beta-\boldsymbol{\theta}_0 \right) \xrightarrow[N \rightarrow \infty]{L}\mathcal{N}\left(\boldsymbol{0}, \boldsymbol{\Sigma}_\beta(\boldsymbol{\theta}_0)\right)$$
		with the matrices $\boldsymbol{\Sigma}_\beta(\boldsymbol{\theta}_0),	\boldsymbol{P}_\beta(\boldsymbol{\theta}_0)$ and $\boldsymbol{Q}_\beta(\boldsymbol{\theta}_0)$ given as in (\ref{eq:SigmaPQmatrices}), and 		
		matrices $\boldsymbol{J}_\beta(\boldsymbol{\theta}) $ and $\boldsymbol{K}_\beta(\boldsymbol{\theta}) $  given by
		\begin{align*}
			\boldsymbol{J}_\beta(\boldsymbol{\theta}) &= \sum_{j=1}^{L+1} \boldsymbol{u}_j \boldsymbol{u}_j^T \pi_j(\boldsymbol{\theta})^{\beta+1}\\
			\boldsymbol{K}_\beta(\boldsymbol{\theta}) &= \sum_{j=1}^{L+1} \boldsymbol{u}_j \boldsymbol{u}_j^T \pi_j(\boldsymbol{\theta})^{2\beta+1} - \left( \sum_{j=1}^{L+1} \boldsymbol{u}_j  \pi_j(\boldsymbol{\theta})^{\beta+1}\right) \left( \sum_{j=1}^{L+1} \boldsymbol{u}_j  \pi_j(\boldsymbol{\theta})^{\beta+1}\right)^T,\\
		\end{align*}
		where 
		$$\boldsymbol{u}_j(\boldsymbol{\theta}) = \frac{\partial \log(\pi_j(\boldsymbol{\theta}))}{\partial \boldsymbol{\theta}}.$$
		For the step-stress ALT model under interval-censored data, we have that
		$$\boldsymbol{u}_j(\boldsymbol{\theta})  = \frac{1}{\pi_j(\boldsymbol{\theta})}\frac{\partial \pi_j(\boldsymbol{\theta})}{\partial \boldsymbol{\theta}} = \frac{\boldsymbol{w}_j}{\pi_j(\boldsymbol{\theta})}$$
		and hence, we can write the matrices $\boldsymbol{J}_\beta(\boldsymbol{\theta})$ and $\boldsymbol{K}_\beta(\boldsymbol{\theta})$ as
		\begin{align*}
			\boldsymbol{J}_\beta(\boldsymbol{\theta}) &= \sum_{j=1}^{L+1} \boldsymbol{w}_j \boldsymbol{w}_j^T \pi_j(\boldsymbol{\theta})^{\beta-1} = \boldsymbol{W}^T D_{\boldsymbol{\pi}(\boldsymbol{\theta})}^{\beta-1} \boldsymbol{W}\\
			\boldsymbol{K}_\beta(\boldsymbol{\theta}) &= \sum_{j=1}^{L+1} \boldsymbol{w}_j\boldsymbol{w}_j^T \pi_j(\boldsymbol{\theta})^{2\beta-1} - \left( \sum_{j=1}^{L+1} \boldsymbol{w}_j \pi_j(\boldsymbol{\theta})^{\beta}\right) \left( \sum_{j=1}^{L+1} \boldsymbol{w}_j \pi_j(\boldsymbol{\theta})^{\beta}\right)^T = \boldsymbol{W}^T \left(D_{\boldsymbol{\pi}(\boldsymbol{\theta})}^{2\beta-1}-\boldsymbol{\pi}(\boldsymbol{\theta})^{\beta}\boldsymbol{\pi}(\boldsymbol{\theta})^{\beta T}\right) \boldsymbol{W}.\\
		\end{align*}
		and then the result holds.
\end{proof}

The above theorem proves the consistency of the  RMDPDE under  the restricted parameter space. Also, a consistent estimator of the variance-covariance matrix  $\boldsymbol{\Sigma}_\beta(\boldsymbol{\theta}_0)$ and be readily obtained as $\boldsymbol{\Sigma}_\beta(\widetilde{\boldsymbol{\theta}}_\beta)$ and consequently univariate approximate confidence intervals for the model parameters can be computed using Wald confidence intervals of the form
$$ IC_\alpha(\theta_i) = \left\{\theta_i  : \widetilde{\theta}_{\beta, i} -   \frac{\Sigma(\widetilde{\boldsymbol{\theta}}_\beta)_{ii}}{\sqrt{N}} z_{\alpha/2} < \theta_i < \widehat{\theta}_{\beta, i} +   \frac{\Sigma(\widetilde{\boldsymbol{\theta}}_\beta)_{ii}}{\sqrt{N}} z_{\alpha/2} \right\}, \hspace{0.3cm} i=1,2,3$$
where $\Sigma(\widetilde{\boldsymbol{\theta}}_\beta)_{ii}$ denotes the $i-$th diagonal entry of the covariance matrix in (\ref{eq:SigmaPQmatrices}), $ \widetilde{\theta}_{\beta, i} $ denotes the  $i-$th component of the parameter vector $\boldsymbol{\theta}$ and $z_{ \alpha/2}$ denotes the upper $\alpha/2$ quantile of a standard normal distribution.
For the multivariate case, the approximate elliptical confidence region for the true model parameter $\boldsymbol{\theta}$ for a $(1-\alpha)$ confidence level is given by,
$$ \mathcal{R}_\alpha(\boldsymbol{\theta}) = \{ \boldsymbol{\theta} \in \Theta : N\left(\widetilde{\boldsymbol{\theta}}_\beta-\boldsymbol{\theta} \right)^T  \boldsymbol{\Sigma}_\beta(\widetilde{\boldsymbol{\theta}}_\beta)^{-1} \left(\widetilde{\boldsymbol{\theta}}_\beta-\boldsymbol{\theta} \right) \leq \chi^2_{3, \alpha}\} $$
where $\chi^2_{3, \alpha}$ denotes the upper $\alpha$ quantile of a chi-square distribution with $3$ degrees of freedom.
Besides, if one is interested in inferring a distribution lifetime characteristic that depends explicitly on the model parameters under normal operating conditions, such as the MTTF or the hazard function at a certain mission time,  a consistent robust estimator can be directly obtained by plugging-in the MDPDE on the corresponding expression. 


\section{Robustness analysis \label{sec:IF}}

The robustness of the MDPDE has been studied 
in multinomial regression models in \cite{Castilla2021, Calvino}. For continuous variables, the influence function (IF) of an estimator measures the effect on the estimate of a single point perturbation of the sample. Robust estimators are supposed to be less influenced by contamination than non-robust estimator, and so their IF should be lower than the IF of a non-robust estimator. Also, it is a desirable property for a robust estimator that the influence of a point disturbance is bounded, so it IF should also be bounded. 

The IF of a estimator is computed in terms of its associated statistical functional. 
Let $F_{\boldsymbol{\theta}}$ and  $G$ be the assumed and true lifetime cumulative distributions underlying and consider a point perturbation $t_0.$ The IF of an estimator with associated functional $\boldsymbol{T}(G)$ is computed as the  derivative of the functional with respect to a smooth deviation of the true distribution at a point perturbation,  $G_\varepsilon = (1-\varepsilon)\boldsymbol{G} + \varepsilon\Delta_{t_0},$ where $\Delta_{t_0}$ denotes the cumulative distribution function of a degenerate variable at $t_0$ and $\varepsilon$ is the contamination proportion. That is, 
\begin{equation} \label{eq:IFmath}
	\text{IF}\left(t_0, \boldsymbol{T}, G\right) = \lim_{\varepsilon \rightarrow 0}\frac{\boldsymbol{T}(G_\varepsilon)- \boldsymbol{T}(G)}{\varepsilon} = \frac{\partial \boldsymbol{T}(G_\varepsilon)}{\partial \varepsilon}\bigg|_{\varepsilon = 0},
\end{equation} 

For the interval-monitored step-stress ALT model with Weibull lifetime distributions, \cite{BalaJae2023a} derived the IF of the MDPDE at the point perturbation $t_0$ and true parameter value $\boldsymbol{\theta}_0$ as
\begin{equation}\label{eq:IF}
		\text{IF}\left(t_0, \boldsymbol{T}_\beta, F_{\boldsymbol{\theta}_0} \right) = \boldsymbol{J}_\beta^{-1}(\boldsymbol{\theta}_0) \boldsymbol{W}^T(\boldsymbol{\theta}_0) \boldsymbol{D}_{\boldsymbol{\pi}(\boldsymbol{\theta}_0)}^{\beta-1}\left(-\boldsymbol{\pi}(\boldsymbol{\theta}_0)+\delta_{t_0} \right).
\end{equation}
where $ \boldsymbol{T}_\beta$ is the functional associated with the MDPDE with tuning parameter $\beta,$ $\boldsymbol{\pi}(\boldsymbol{\theta})$ and $\boldsymbol{g}$ are the  multinomial probability vectors associated  to the distributions $F_{\boldsymbol{\theta}}$ and $G,$ respectively, as defined in Equation (\ref{eq:thprob}). Here $\delta_{t_0}$ denotes a degenerate multinomial probability vector with all probability concentrated at the interval containing the point $t_0.$
The matrices  $\boldsymbol{W}(\boldsymbol{\theta}),$ $\boldsymbol{J}_\beta(\boldsymbol{\theta})$ and $\boldsymbol{D}_{\boldsymbol{\pi}(\boldsymbol{\theta})}$ are as defined in Theorem \ref{thm:estimatingequations} and Equation (\ref{eq:JK}).

 Here, the stress levels and inspections times are the continuous covariates of the model, taking values in a infinite set. Then, studying the boundedness of the IF in its classical approach only makes sense for these continuous covariates.  \cite{BalaJae2022b} proved the boundedness of the IF of the MDPDE for large stress levels and inspection times.
 On the other hand, since the multinomial variable has a finite support, multinomial samples can only take  values in a finite set. 
 Hence, the previous definition of a point contamination taking values towards infinity are intuitively not meaningful for representing multinomial outliers.
 Outlier points within the support of the multinomial  variable are  produced by an abnormal failure counts within a cell, either for early or late failures  of a set of units. These multinomial outliers are generated by a point perturbation of the lifetime distribution $t_0,$ fixed during the time of experimentation.
 The resulting degenerate distribution of the multinomial model  concentrates all probability at one cell and so,
 varying over all possible such outlier cells, we get all possible counting errors. The influence function given in Equation (\ref{eq:IF}) then represents the bias in the estimator caused by infinitesimal amount of such errors, and robustness should be better examined it terms of gross error sensitivity of the functional, as discussed in \cite{CastillaJaenada2022}.
 This view of considering outliers in multinomial sampling as classification errors is in line with the general literature on robust analysis of
 categorical data, including different types of logistic regressions having finite supports for the model densities 
 (\cite{johnson1985}, 
  \cite{croux2003}, 
  \cite{bondell2008}, 
  \cite{basu2017}, 
  \cite{Castilla2021}).

 

For restricted estimators, \cite{Ghosh2015} derived the IF of the RMDPDE in general statistical models.  
They defined the functional $\widetilde{\boldsymbol{T}}_\beta$ associated to the RMDPDE  as a solution of the estimating equations of the (unconstrained) MDPDE given in (\ref{eq:estimatingMDPDE}) over the restricted subspace $\Theta_0$ given in Equation (\ref{eq:restrictedspace}). The existence of such solution is guaranteed by the Implicit Function Theorem.
 Hence, the IF of the RMDPDE at a contamination point $t_0$ and the model distribution with true parameter value $\boldsymbol{\theta}_0$, 
  must simultaneously verify the expression of the IF of the (unrestricted) MDPDE 
 and the subspace constraint, given by
 $\boldsymbol{m}(\widetilde{\boldsymbol{T}}_\beta) = 0.$
 Differentiating on the subspace constraint, we get an additional condition for the IF,
 $$\boldsymbol{M}^T(\boldsymbol{\theta}_0) \text{IF}\left(\boldsymbol{n}, \widetilde{\boldsymbol{T}}_\beta, F_{\boldsymbol{\theta}_0}\right) = 0$$
 and therefore, combining both equations, we get that the IF of the RMDDE is given by
 \begin{equation*}
 	\begin{pmatrix}
 		\boldsymbol{J}_\beta(\boldsymbol{\theta}_0)\\
 		\boldsymbol{M}^T (\boldsymbol{\theta}_0)
 	\end{pmatrix} \text{IF}\left(\boldsymbol{n}, \widetilde{\boldsymbol{T}}_\beta, F_{\boldsymbol{\theta}_0} \right)
 	= \begin{pmatrix}
 		\boldsymbol{W}^T \boldsymbol{D}_{\boldsymbol{\pi}(\boldsymbol{\theta}_0)}^{\beta-1}\left(-\boldsymbol{\pi}(\boldsymbol{\theta}_0)+\delta_{t_0} \right) \\
 		0
 	\end{pmatrix}.
 \end{equation*}
Finally, multiplying both terms by $ \left(\boldsymbol{J}_\beta(\boldsymbol{\theta})^T, \boldsymbol{M}(\boldsymbol{\theta}_0) \right)$ and inverting in both sizes of the equation, the expression of the IF of the restricted MDPDE is given by
 \begin{equation}\label{eq:IFrestricted}
 	\text{IF}\left(\boldsymbol{n}, \widetilde{\boldsymbol{T}}_\beta, F_{\boldsymbol{\theta}_0}\right)
 	= \left(\boldsymbol{J}_\beta(\boldsymbol{\theta}_0)^T\boldsymbol{J}_\beta(\boldsymbol{\theta}_0) + \boldsymbol{M}(\boldsymbol{\theta}_0)\boldsymbol{M}(\boldsymbol{\theta}_0)^T \right)^{-1}
 	\boldsymbol{J}_\beta(\boldsymbol{\theta}_0)^T
 	\boldsymbol{W}^T \boldsymbol{D}_{\boldsymbol{\pi}(\boldsymbol{\theta}_0)}^{\beta-1}\left(-\boldsymbol{\pi}(\boldsymbol{\theta}_0)+\delta_{t_0} \right).
 \end{equation}
The matrix $\left(\boldsymbol{J}_\beta(\boldsymbol{\theta}_0)^T\boldsymbol{J}_\beta(\boldsymbol{\theta}_0) + \boldsymbol{M}(\boldsymbol{\theta}_0)\boldsymbol{M}(\boldsymbol{\theta}_0)^T\right)^{-1}\boldsymbol{J}_\beta(\boldsymbol{\theta}_0)^T$ is typically assumed to be bounded. Then, the robustness of the RMDPDE depends  on the boundedness of the IF of the (unrestricted) MDPDE. Following the previous discussion, the RMDPDE is robust for positives values of $\beta$ whereas the restricted MLE (corresponding to $\beta=0$) lacks of robustness against outlying  stress levels or inspection times data.
 
 

\section{Robust Rao-type test statistics \label{sec:Rao}}
We consider a testing problem with composite null hypothesis of the form
\begin{equation} \label{eq:null}
	\operatorname{H}_0 : \boldsymbol{m}(\boldsymbol{\theta}) = \boldsymbol{0} \text{ vs } \operatorname{H}_1 : \boldsymbol{m}(\boldsymbol{\theta}) \neq \boldsymbol{0}.
\end{equation}

 \cite{Rao1948} introduced the Rao test (or the score test) as an alternative to the well-known likelihood ratio and Wald test. Rao's test uses the restricted MLE and may outperform the Wald and likelihood ratio tests in some situations, although there is no better statistics overall.
All these classical tests are based on the MLE, so they will inherit the lack of robustness of the MLE.
To overcome the robustness drawback with a small loss of efficiency,
\cite{Basu2018} developed  robust parametric tests for composite hypotheses based on RMDPDEs. Later,
\cite{Basu2022} generalizes the Rao test using the RMDPD for general statistical models, and
\cite{JaenadaMirandaPardo2022} studied Rényi pseudodistance-based restricted estimators for the same purpose. 
For interval-censored step-stress ALTs, \cite{BalaJae2023a} developed RMDPDE under exponential lifetimes, and derived explicit expressions of the Rao-type tests for the exponential lifetime distribution and linear null hypothesis.
In the same spirit, we develop here a robust generalization of the Rao test for interval-monitoring step-stress ALTs but assuming Weibull lifetime distributions.

The Rao-type test statistics use the $\beta$-score function $\boldsymbol{U}_\beta(\boldsymbol{\theta})$ defined in Equation (\ref{eq:score}) and it asymptotic distribution.
Simple calculations show that $\mathbb{E}[\boldsymbol{U}_\beta(\widetilde{\boldsymbol{\theta}}_\beta)] = 0$ (as the parameter must annul the score function)	and  that it covariance matrix	is given by $\text{Cov}[\boldsymbol{U}_\beta(\widetilde{\boldsymbol{\theta}}_\beta)] = \frac{1}{N} \boldsymbol{K}_\beta(\boldsymbol{\theta}_0)$ with $\boldsymbol{K}_\beta(\boldsymbol{\theta})$ defined as in Equation (\ref{eq:JK}) and $\boldsymbol{\theta}_0$ the true value of the parameter. By the central limit theorem (CLT) we can easily establish the distribution of the $\beta-$score as follows,
$$\sqrt{N}\boldsymbol{U}(\widetilde{\boldsymbol{\theta}}_\beta) \xrightarrow[N \rightarrow \infty]{L}\mathcal{N}\left(\boldsymbol{0},  \boldsymbol{K}_\beta(\boldsymbol{\theta}_0)\right). $$

\begin{definition}
The Rao-type statistics based on the RMDPDE for testing the composite null hypothesis defined (\ref{eq:null}), $\widetilde{\boldsymbol{\theta}}^\beta,$ is given by
		\begin{equation}\label{eq:raotest}
			R_{\beta}(\widetilde{\boldsymbol{\theta}}^\beta) = N\boldsymbol{U}_{\beta}(\widetilde{\boldsymbol{\theta}}^\beta)^T \boldsymbol{Q}_{\beta}(\widetilde{\boldsymbol{\theta}}^\beta)\left[ \boldsymbol{Q}_{\beta}(\widetilde{\boldsymbol{\theta}}^\beta)^T \boldsymbol{K}_{\beta}(\widetilde{\boldsymbol{\theta}}^\beta)
			\boldsymbol{Q}_{\beta}(\widetilde{\boldsymbol{\theta}}^\beta)\right]^{-1}
			\boldsymbol{Q}_{\beta}(\widetilde{\boldsymbol{\theta}}^\beta)^T \boldsymbol{U}_{\beta,N}(\widetilde{\boldsymbol{\theta}}^\beta),
		\end{equation}
		where matrices $\boldsymbol{K}_{\beta}(\boldsymbol{\theta})$, $\boldsymbol{Q}_{\beta}(\boldsymbol{\theta})$
		and $\boldsymbol{U}_{\beta}(\boldsymbol{\theta}),$  are defined in (\ref{eq:JK}), (\ref{eq:SigmaPQmatrices}) and (\ref{eq:score}), respectively. 
\end{definition}
The above expression only depends on the restricted estimator, and so it is not necessary to compute unrestricted estimators.
Then, for simple null hypotheses, $\operatorname{H}_0 : \boldsymbol{\theta} = \boldsymbol{\theta}_0,$ no estimator is needed for defining the test, which is one of the most important advantages of the test. 
Also, if $\beta=0,$ we recover the classical Rao test statistic for the interval-monitoring step-stress ALT model.		
\begin{theorem}
	Under the null hypothesis given in Equation (\ref{eq:null}), the Rao-type test statistic defined in (\ref{eq:raotest}) converges to a chi-square distribution with $r$ degrees of freedom.
\end{theorem}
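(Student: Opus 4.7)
The plan is to express $R_\beta(\widetilde{\boldsymbol{\theta}}^\beta)$ as a quadratic form in a suitable linear transformation of the score vector and then invoke the classical result that $\boldsymbol{Y}^T\boldsymbol{\Sigma}^{-1}\boldsymbol{Y}\sim\chi^2_r$ whenever $\boldsymbol{Y}\sim\mathcal{N}_r(\boldsymbol{0},\boldsymbol{\Sigma})$ with $\boldsymbol{\Sigma}$ of full rank $r$. Concretely, factoring $N$ across the score vectors, the statistic in Equation (\ref{eq:raotest}) reads
\begin{equation*}
R_{\beta}(\widetilde{\boldsymbol{\theta}}^\beta) \;=\; \boldsymbol{Y}_N^T\,\boldsymbol{\Sigma}_N^{-1}\,\boldsymbol{Y}_N, \qquad \boldsymbol{Y}_N := \sqrt{N}\,\boldsymbol{Q}_\beta(\widetilde{\boldsymbol{\theta}}^\beta)^T\boldsymbol{U}_\beta(\widetilde{\boldsymbol{\theta}}^\beta),\quad \boldsymbol{\Sigma}_N := \boldsymbol{Q}_\beta(\widetilde{\boldsymbol{\theta}}^\beta)^T\boldsymbol{K}_\beta(\widetilde{\boldsymbol{\theta}}^\beta)\boldsymbol{Q}_\beta(\widetilde{\boldsymbol{\theta}}^\beta),
\end{equation*}
so the whole proof reduces to identifying the asymptotic law of $\boldsymbol{Y}_N$ and the probability limit of $\boldsymbol{\Sigma}_N$.

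The first step is to invoke the CLT statement immediately preceding the definition, namely
$\sqrt{N}\,\boldsymbol{U}_\beta(\widetilde{\boldsymbol{\theta}}^\beta)\xrightarrow{L}\mathcal{N}(\boldsymbol{0},\boldsymbol{K}_\beta(\boldsymbol{\theta}_0))$
under $H_0$, which holds because, by consistency of the RMDPDE established in the previous theorem, $\widetilde{\boldsymbol{\theta}}^\beta\xrightarrow{P}\boldsymbol{\theta}_0\in\Theta_0$. Since $\boldsymbol{J}_\beta(\cdot)$, $\boldsymbol{K}_\beta(\cdot)$ and $\boldsymbol{M}(\cdot)$ are continuous in a neighbourhood of $\boldsymbol{\theta}_0$ and $\boldsymbol{J}_\beta(\boldsymbol{\theta}_0)$ and $\boldsymbol{M}^T(\boldsymbol{\theta}_0)\boldsymbol{J}_\beta(\boldsymbol{\theta}_0)^{-1}\boldsymbol{M}(\boldsymbol{\theta}_0)$ are nonsingular (the latter by the rank condition on $\boldsymbol{M}$), the continuous mapping theorem yields
$\boldsymbol{Q}_\beta(\widetilde{\boldsymbol{\theta}}^\beta)\xrightarrow{P}\boldsymbol{Q}_\beta(\boldsymbol{\theta}_0)$ and $\boldsymbol{K}_\beta(\widetilde{\boldsymbol{\theta}}^\beta)\xrightarrow{P}\boldsymbol{K}_\beta(\boldsymbol{\theta}_0)$. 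Slutsky's theorem then gives
\begin{equation*}
\boldsymbol{Y}_N\xrightarrow{L}\mathcal{N}\bigl(\boldsymbol{0},\,\boldsymbol{Q}_\beta(\boldsymbol{\theta}_0)^T\boldsymbol{K}_\beta(\boldsymbol{\theta}_0)\boldsymbol{Q}_\beta(\boldsymbol{\theta}_0)\bigr) \quad\text{and}\quad \boldsymbol{\Sigma}_N\xrightarrow{P}\boldsymbol{\Sigma}_0:=\boldsymbol{Q}_\beta(\boldsymbol{\theta}_0)^T\boldsymbol{K}_\beta(\boldsymbol{\theta}_0)\boldsymbol{Q}_\beta(\boldsymbol{\theta}_0).
\end{equation*}

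Finally, another application of the continuous mapping theorem to the map $(\boldsymbol{y},\boldsymbol{S})\mapsto \boldsymbol{y}^T\boldsymbol{S}^{-1}\boldsymbol{y}$ (continuous at $(\boldsymbol{y},\boldsymbol{\Sigma}_0)$ provided $\boldsymbol{\Sigma}_0$ is nonsingular) delivers $R_\beta(\widetilde{\boldsymbol{\theta}}^\beta)\xrightarrow{L}\boldsymbol{Y}^T\boldsymbol{\Sigma}_0^{-1}\boldsymbol{Y}$ with $\boldsymbol{Y}\sim\mathcal{N}_r(\boldsymbol{0},\boldsymbol{\Sigma}_0)$, and the standard Gaussian quadratic form result identifies this limit as $\chi^2_r$.

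The main obstacle, and the step that carries the real content, is verifying that $\boldsymbol{\Sigma}_0$ is an $r\times r$ positive-definite matrix of rank exactly $r$. This reduces to two structural facts: (i) $\boldsymbol{Q}_\beta(\boldsymbol{\theta}_0)$ is $3\times r$ of full column rank $r$, which follows from the assumption $\text{rank}(\boldsymbol{M}(\boldsymbol{\theta}_0))=r$ together with the nonsingularity of $\boldsymbol{J}_\beta(\boldsymbol{\theta}_0)^{-1}$; and (ii) $\boldsymbol{K}_\beta(\boldsymbol{\theta}_0)$ is positive definite, which follows from the multinomial form in Equation (\ref{eq:JK}) once we note that $\boldsymbol{D}_{\boldsymbol{\pi}(\boldsymbol{\theta}_0)}^{2\beta-1}-\boldsymbol{\pi}(\boldsymbol{\theta}_0)^{\beta}\boldsymbol{\pi}(\boldsymbol{\theta}_0)^{\beta T}$ is positive semidefinite (it is the covariance of the multinomial $\beta$-score components) and that the matrix $\boldsymbol{W}(\boldsymbol{\theta}_0)$ has full column rank $3$ at an identifiable interior point, which is a standard regularity assumption already implicit in the invertibility of $\boldsymbol{J}_\beta(\boldsymbol{\theta}_0)$. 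Combining (i) and (ii) yields $\boldsymbol{\Sigma}_0\succ 0$ and hence the $\chi^2_r$ limit.
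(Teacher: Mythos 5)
Your reduction of the statistic to a Gaussian quadratic form, and your closing discussion of why $\boldsymbol{Q}_\beta(\boldsymbol{\theta}_0)^T\boldsymbol{K}_\beta(\boldsymbol{\theta}_0)\boldsymbol{Q}_\beta(\boldsymbol{\theta}_0)$ has full rank $r$, are both fine (the rank argument is in fact more explicit than anything in the paper). The gap is in the one step that carries the distributional content: you assert that $\sqrt{N}\,\boldsymbol{U}_\beta(\widetilde{\boldsymbol{\theta}}^\beta)\xrightarrow{L}\mathcal{N}(\boldsymbol{0},\boldsymbol{K}_\beta(\boldsymbol{\theta}_0))$ ``because $\widetilde{\boldsymbol{\theta}}^\beta\xrightarrow{P}\boldsymbol{\theta}_0$.'' Consistency is not enough here. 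Expanding the score, $\boldsymbol{U}_\beta(\widetilde{\boldsymbol{\theta}}^\beta)=\boldsymbol{U}_\beta(\boldsymbol{\theta}_0)-\boldsymbol{J}_\beta(\boldsymbol{\theta}_0)(\widetilde{\boldsymbol{\theta}}^\beta-\boldsymbol{\theta}_0)+o_P(N^{-1/2})$, and since $\widetilde{\boldsymbol{\theta}}^\beta-\boldsymbol{\theta}_0=O_P(N^{-1/2})$, the correction term is $O_P(1)$ after multiplication by $\sqrt{N}$ — it does not vanish, so the limit law of the score at the estimator is genuinely different from that of the score at $\boldsymbol{\theta}_0$. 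Indeed, the estimating equation forces $\boldsymbol{U}_\beta(\widetilde{\boldsymbol{\theta}}^\beta)=-\boldsymbol{M}(\widetilde{\boldsymbol{\theta}}^\beta)\boldsymbol{\lambda}_\beta$ to lie in an $r$-dimensional column space, so $\sqrt{N}\,\boldsymbol{U}_\beta(\widetilde{\boldsymbol{\theta}}^\beta)$ can only converge to a degenerate normal of rank $r$, never to the full-rank $\mathcal{N}(\boldsymbol{0},\boldsymbol{K}_\beta(\boldsymbol{\theta}_0))$. (The paper does display this loose claim just before the definition of the test, but its proof never uses it.)

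What saves the theorem is that the \emph{projected} score is still asymptotically equivalent to the projected score at the truth: $\boldsymbol{Q}_\beta^T\boldsymbol{J}_\beta(\boldsymbol{\theta}_0)=(\boldsymbol{M}^T\boldsymbol{J}_\beta^{-1}\boldsymbol{M})^{-1}\boldsymbol{M}^T$, and the Taylor expansion of the constraint $\boldsymbol{m}(\widetilde{\boldsymbol{\theta}}^\beta)=\boldsymbol{m}(\boldsymbol{\theta}_0)=\boldsymbol{0}$ shows $\boldsymbol{M}^T(\boldsymbol{\theta}_0)(\widetilde{\boldsymbol{\theta}}^\beta-\boldsymbol{\theta}_0)=o_P(N^{-1/2})$, so the troublesome correction term is annihilated and $\sqrt{N}\,\boldsymbol{Q}_\beta^T\boldsymbol{U}_\beta(\widetilde{\boldsymbol{\theta}}^\beta)=\sqrt{N}\,\boldsymbol{Q}_\beta^T\boldsymbol{U}_\beta(\boldsymbol{\theta}_0)+o_P(1)\xrightarrow{L}\mathcal{N}(\boldsymbol{0},\boldsymbol{Q}_\beta^T\boldsymbol{K}_\beta\boldsymbol{Q}_\beta)$. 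This two-expansion argument (score plus constraint) is exactly what the paper's proof does, phrased through the Lagrange multipliers: it shows $\boldsymbol{Q}_\beta^T\boldsymbol{U}_\beta(\widetilde{\boldsymbol{\theta}}^\beta)=-\boldsymbol{\lambda}_\beta$, solves the bordered linear system for $(\widetilde{\boldsymbol{\theta}}^\beta-\boldsymbol{\theta}_0,-\boldsymbol{\lambda}_\beta)$, and deduces $\sqrt{N}\boldsymbol{\lambda}_\beta\xrightarrow{L}\mathcal{N}(\boldsymbol{0},\boldsymbol{Q}_\beta^T\boldsymbol{K}_\beta\boldsymbol{Q}_\beta)$. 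Your proof needs this step supplied; once it is, the rest of your argument (Slutsky, continuous mapping, the rank verification) goes through.
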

	\begin{proof}
	The MDPDE restricted to the null hypothesis (\ref{eq:null}), $\widetilde{\boldsymbol{\theta}}^\beta,$ 
	is defined as the minimum of the DPD loss restricted to the condition $\boldsymbol{m}(\boldsymbol{\theta}) =d$. Then, it satisfies the restricted equations
	\begin{equation}\label{eq:restrictedestimating}
		\boldsymbol{U}_{\beta}(\widetilde{\boldsymbol{\theta}}^\beta) + \boldsymbol{M}(\widetilde{\boldsymbol{\theta}}^\beta)\boldsymbol{\lambda}_\beta = \boldsymbol{0},
	\end{equation}
	for some vector $\boldsymbol{\lambda}_\beta$ of Lagrangian multipliers. 
	Writing the $\beta$-score as
	$\boldsymbol{U}_{\beta}(\widetilde{\boldsymbol{\theta}}^\beta) = -\boldsymbol{M}(\boldsymbol{\theta})\boldsymbol{\lambda}
	$
 	and multiplying its transpose by the matrix $ \boldsymbol{Q}_{\beta}(\widetilde{\boldsymbol{\theta}}^\beta)$ we get
	$$
	\boldsymbol{U}_{\beta}(\widetilde{\boldsymbol{\theta}}^\beta)^T \boldsymbol{Q}_{\beta}(\widetilde{\boldsymbol{\theta}}^\beta) = -\boldsymbol{\lambda}_\beta^T\boldsymbol{M}(\widetilde{\boldsymbol{\theta}}^\beta)^T\boldsymbol{Q}_{\beta}(\widetilde{\boldsymbol{\theta}}^\beta) = - \boldsymbol{\lambda}_\beta^T
	$$
	where $\boldsymbol{Q}_{\beta}(\widetilde{\boldsymbol{\theta}}^\beta)$ is defined in (\ref{eq:SigmaPQmatrices}). Now, the Rao-type test statistics defined in (\ref{eq:raotest}) can be expressed in terms of the vector of Lagrange multipliers as follows,
	\begin{equation*}
		\boldsymbol{R}_{\beta}(\boldsymbol{\theta}) = N\boldsymbol{\lambda}_\beta^T\left[ \boldsymbol{Q}_{\beta}(\boldsymbol{\theta}^\beta)^T \boldsymbol{K}_{\beta}(\boldsymbol{\theta}^\beta)
		\boldsymbol{Q}_{\beta}(\boldsymbol{\theta}^\beta)\right]^{-1}
		\boldsymbol{\lambda}_\beta.
	\end{equation*}
	Then, we need first to derive the asymptotic distribution of the vector of Lagrangian multipliers $\boldsymbol{\lambda}_\beta.$
	We  consider the second order Taylor expansion series of the $\beta$-score function $\boldsymbol{U}_{\beta}(\boldsymbol{\theta})$ around the true parameter value $\boldsymbol{\theta}_0,$ 
	$$\boldsymbol{U}_{\beta}(\widetilde{\boldsymbol{\theta}}^\beta) = \boldsymbol{U}_{\beta}(\boldsymbol{\theta}_0) + \frac{\partial \boldsymbol{U}_{\beta}(\boldsymbol{\theta})}{\partial \boldsymbol{\theta}}\bigg|_{\boldsymbol{\theta} = \boldsymbol{\theta}_0}\left(\widetilde{\boldsymbol{\theta}}^\beta- \boldsymbol{\theta}_0\right) + o\left(||\widetilde{\boldsymbol{\theta}}^\beta- \boldsymbol{\theta}_0||^2\boldsymbol{1}_2\right).$$
	On the other hand, because $\widehat{p} \xrightarrow[N\rightarrow \infty]{P} \boldsymbol{\pi}(\boldsymbol{\theta}_0)$, taking derivatives in (\ref{eq:score} ) it is not difficult to show that 
	$$\frac{\partial \boldsymbol{U}_{\beta}(\boldsymbol{\theta})}{\partial \boldsymbol{\theta}}\bigg|_{\boldsymbol{\theta} = \boldsymbol{\theta}_0} \xrightarrow[N\rightarrow \infty]{P} - \boldsymbol{J}_\beta\left(\boldsymbol{\theta}_0\right),$$
	where the matrix $\boldsymbol{J}_\beta\left(\boldsymbol{\theta}\right)$ is defined in (\ref{eq:JK}). 
	Therefore, we can approximate the $\beta$-score function at the RMDPDE by
	$$\boldsymbol{U}_{\beta}(\widetilde{\boldsymbol{\theta}}^\beta) = \boldsymbol{U}_{\beta}(\boldsymbol{\theta}_0) - \boldsymbol{J}_\beta\left(\boldsymbol{\theta}_0\right) \left(\widetilde{\boldsymbol{\theta}}^\beta- \boldsymbol{\theta}_0\right) + o\left(||\widetilde{\boldsymbol{\theta}}^\beta- \boldsymbol{\theta}_0||^2\boldsymbol{1}_2\right)+o\left( \boldsymbol{1}_2\right).$$
	Based on the estimating equations of the RMDPDE given in (\ref{eq:restrictedestimating}), we have that
	\begin{equation} \label{eq:proof1}
			\boldsymbol{U}_{\beta}(\boldsymbol{\theta}_0) - \boldsymbol{J}_\beta\left(\boldsymbol{\theta}_0\right) \left(\widetilde{\boldsymbol{\theta}}^\beta- \boldsymbol{\theta}_0\right)  + \boldsymbol{M}(\widetilde{\boldsymbol{\theta}}^\beta)\boldsymbol{\lambda}_\beta= o\left( \boldsymbol{1}_2\right).
	\end{equation}
	
	On the other hand, the second order Taylor expansion of the restriction function, $\boldsymbol{m}(\boldsymbol{\theta}),$ around  the true parameter value $\boldsymbol{\theta_0}$ at the RMDPDE yields
	$$\boldsymbol{m}(\widetilde{\boldsymbol{\theta}}^\beta) = \boldsymbol{m}(\boldsymbol{\theta}_0) + \boldsymbol{M}^T(\boldsymbol{\theta}_0)\left(\widetilde{\boldsymbol{\theta}}^\beta- \boldsymbol{\theta}_0\right) + o\left(||\widetilde{\boldsymbol{\theta}}^\beta- \boldsymbol{\theta}_0||^2\boldsymbol{1}_2\right).$$
	
	Under the null hypothesis the restriction $\boldsymbol{m}(\boldsymbol{\theta}_0)=0$ holds, and the equality is also satisfied by the restricted estimator. Then, we can write,
	\begin{equation}\label{eq:proof2}
		\boldsymbol{M}^T(\boldsymbol{\theta}_0)\left(\widetilde{\boldsymbol{\theta}}^\beta- \boldsymbol{\theta}_0\right)  = o\left(||\widetilde{\boldsymbol{\theta}}^\beta- \boldsymbol{\theta}_0||^2\boldsymbol{1}_2\right).
	\end{equation}
	Joining both equations (\ref{eq:proof1} and \ref{eq:proof2}),  we get
	\begin{equation*}
		\begin{pmatrix}
			\boldsymbol{J}_\beta(\boldsymbol{\theta}_0) & \boldsymbol{M}(\boldsymbol{\theta}_0) \\
			\boldsymbol{M}^T(\boldsymbol{\theta}_0) & \boldsymbol{0}_r
		\end{pmatrix} 
		\begin{pmatrix}
			\widetilde{\boldsymbol{\theta}}^\beta - \boldsymbol{\theta}_0\\
			-\boldsymbol{\lambda}_\beta
		\end{pmatrix} =
		\begin{pmatrix}
			\boldsymbol{U}_{\beta}(\boldsymbol{\theta}_0)\\
			 \boldsymbol{0}_{r}
		\end{pmatrix} + 	\begin{pmatrix}
			o(\boldsymbol{1}_2)\\
			o(\boldsymbol{1}_2)
		\end{pmatrix} 
	\end{equation*}
	and solving the previous equation for $(
	\widetilde{\boldsymbol{\theta}}^\beta - \boldsymbol{\theta}_0)$ and $-\boldsymbol{\lambda}_\beta$, we have that
	\begin{equation*}
		\begin{pmatrix}
			\widetilde{\boldsymbol{\theta}}^\beta - \boldsymbol{\theta}_0\\
			-\boldsymbol{\lambda}_\beta
		\end{pmatrix} =
		\begin{pmatrix}
			\boldsymbol{J}_\beta(\boldsymbol{\theta}_0) & \boldsymbol{M}(\boldsymbol{\theta}_0)\\
			\boldsymbol{M}^T(\boldsymbol{\theta}_0) & \boldsymbol{0}
		\end{pmatrix}^{-1}
		\begin{pmatrix}
			\boldsymbol{U}_{\beta}(\boldsymbol{\theta}_0)\\
			\boldsymbol{0}_r
		\end{pmatrix}  + 	\begin{pmatrix}
			o(\boldsymbol{1}_2)\\
			o(\boldsymbol{1}_2)
		\end{pmatrix} .
	\end{equation*}
	Now, computing the inverse matrix
	\begin{equation*}
		\begin{pmatrix}
			\boldsymbol{J}_\beta(\boldsymbol{\theta}_0) & \boldsymbol{M}(\boldsymbol{\theta}_0) \\
			\boldsymbol{M}^T(\boldsymbol{\theta}_0) & \boldsymbol{0}
		\end{pmatrix}^{-1} = 
		\begin{pmatrix}
			\boldsymbol{P}_\beta(\boldsymbol{\theta}_0) & \boldsymbol{Q}_\beta(\boldsymbol{\theta}_0)\\
			\boldsymbol{Q}^T_\beta(\boldsymbol{\theta}_0) &  \boldsymbol{R}_\beta(\boldsymbol{\theta}_0)
		\end{pmatrix}
	\end{equation*} 
	with $\boldsymbol{P}_\beta(\boldsymbol{\theta}_0)$ and $\boldsymbol{Q}_\beta(\boldsymbol{\theta}_0)$ defined in (\ref{eq:SigmaPQmatrices}) and $\boldsymbol{R}_\beta(\boldsymbol{\theta}) = \left(\boldsymbol{M}^T(\boldsymbol{\theta})\boldsymbol{J}_\beta(\boldsymbol{\theta})^{-1}\boldsymbol{M}(\boldsymbol{\theta})\right)^{-1}$. But from the $\beta$-score function converges to a normal distribution of the form 
	\begin{equation*}
		\begin{pmatrix}
			\sqrt{N} \boldsymbol{U}_{\beta,N}(\boldsymbol{\theta}_0)\\
			\boldsymbol{0}_r
		\end{pmatrix} \xrightarrow[N\rightarrow \infty]{L} \mathcal{N}\left(\boldsymbol{0}_{3}, \begin{pmatrix}
			\boldsymbol{K}_\beta(\boldsymbol{\theta}_0) & \boldsymbol{0}\\
			\boldsymbol{0}^T & 0
		\end{pmatrix}\right)
	\end{equation*}
	and hence, 
	\begin{equation*}
		\begin{pmatrix}
			\sqrt{N}(\widetilde{\boldsymbol{\theta}}^\beta - \boldsymbol{\theta}_0)\\
			-\sqrt{N}\boldsymbol{\lambda}_\beta
		\end{pmatrix} \xrightarrow[N\rightarrow \infty]{L} \mathcal{N}\left(\boldsymbol{0}_{3},
		\boldsymbol{V}_\beta(\boldsymbol{\theta}_0) \right)
	\end{equation*}
	with 
	\begin{equation*}
		\boldsymbol{V}_\beta(\boldsymbol{\theta}_0) = \begin{pmatrix}
			\boldsymbol{P}_\beta(\boldsymbol{\theta}_0) & \boldsymbol{Q}_\beta(\boldsymbol{\theta}_0)\\
			\boldsymbol{Q}_\beta(\boldsymbol{\theta}_0)^T &  \boldsymbol{R}_\beta(\boldsymbol{\theta}_0)
		\end{pmatrix}
		\begin{pmatrix}
			\boldsymbol{K}_\beta(\boldsymbol{\theta}_0) & \boldsymbol{0}\\
			\boldsymbol{0}^T & 0
		\end{pmatrix}
		\begin{pmatrix}
			\boldsymbol{P}_\beta(\boldsymbol{\theta}_0) & \boldsymbol{Q}_\beta(\boldsymbol{\theta}_0)\\
			\boldsymbol{Q}_\beta(\boldsymbol{\theta}_0)^T &  \boldsymbol{R}_\beta(\boldsymbol{\theta}_0)
		\end{pmatrix}.
	\end{equation*}
	Thus, the asymptotic distribution of the vector of Lagrangian multipliers is given by
	\begin{equation*}\label{eq:asymplambda}
		\sqrt{N} \boldsymbol{\lambda}_\beta
		\xrightarrow[N\rightarrow \infty]{L} \mathcal{N}\left(0,
		\boldsymbol{Q}_\beta(\boldsymbol{\theta}_0)^T \boldsymbol{K}_\beta(\boldsymbol{\theta}_0)\boldsymbol{Q}_\beta(\boldsymbol{\theta}_0) \right).
	\end{equation*}
	Using the previous convergence and the consistency of the RMDPDE, it follows that the asymptotic distribution of the Rao-type test statistics, $\boldsymbol{R}_{\beta}(\widetilde{\boldsymbol{\theta}}^\beta),$
	is a chi-square distribution with $r$ degrees of freedom.
	
\end{proof}

If the null hypothesis defined in Equation (\ref{eq:null}) holds, the $\beta$-score function defined in Equation (\ref{eq:score}) should be small, and so Rao-type test statistic defined in (\ref{eq:raotest}). Then, we can define the reject region of the test as

$$RC = \left\{ (n_1,...,n_{L+1}): \boldsymbol{R}_{\beta}(\widetilde{\boldsymbol{\theta}}^\beta)> \chi^2_{r, \alpha}  \right\},$$
where $\chi^2_{r, \alpha}$ is the upper $\alpha$ quantile of chi-squared distribution with $r$ degrees of freedom.

The robustness of the estimators will be inherited by its associated test statistics, and so, large values of the tuning parameter will produce more robust test statistics.

Let us finally consider the most common composite hypothesis related to the problem of testing a part of the parameter vector,
\begin{equation*}
	\operatorname{H}_0 : \boldsymbol{\theta}^{1} = \boldsymbol{\theta}^{01},
\end{equation*}
where  $\boldsymbol{\theta}^{1}$ are $r < 3$ the components of $\boldsymbol{\theta},$ assumed to be the first $r$ components for simplicity.
For example, the above set-up can be used to  test whether the lifetime distribution is an exponential, by testing
$ \operatorname{H}_0 : \eta = 1.$
In that case, the matrix of derivatives $\boldsymbol{M}^T(\boldsymbol{\beta}) = [\boldsymbol{I}_{r\times r} \boldsymbol{0}_{r\times(3-r)}]$ and the Rao-type test statistic is simplified as
$$	R_{\beta}(\widetilde{\boldsymbol{\theta}}^\beta) = N  \boldsymbol{U}_{\beta}^1(\widetilde{\boldsymbol{\theta}}_\beta)^T
 \boldsymbol{K}^{-1}_{\beta, 11}(\widetilde{\boldsymbol{\theta}}_\beta) \boldsymbol{U}_{\beta}^1(\widetilde{\boldsymbol{\theta}}_\beta)$$
where $ \boldsymbol{K}_{\beta, 11}(\boldsymbol{\theta})$ denotes the $r\times r$ principal sub-matrix of $ \boldsymbol{K}_{\beta}(\boldsymbol{\theta}).$

\section{Reliability analysis for solar lighting devices \label{sec:simulatio}}

We empirically examine the performance of the proposed RMDPDEs and their corresponding Rao-type test statistics in a simulated dataset based on a simple step-stress ALT conducted in  \cite{Han2014} for analyzing the reliability of solar lighting devices.
In the solar lighting device dataset the stress factor was temperature of the testing chamber, which was noted to typically operate at $ 293K$ but increased  to $353K$ for the step-stress study. The temperature was standardized in the statistical analysis yielding the stress levels $x_1= 0.1$ and $x_2= 0.5.$
The original dataset consisted of  $n = 30$ prototypes and the total test duration was $\tau_2 =  20$ (in hundred hours).
Prior studies considered exponential or Weibull probability distributions with a constant shape parameter across different temperature levels  for the failure times, and assumed that the distribution scale parameter and the stress level are log-linearly related. The  estimated  parameters assuming exponential distribution were $ (\widehat{a}_0, \widehat{a}_1) = (3.6597, -2.4131)^T,$ or equivalently $ \boldsymbol{\theta} = (\widehat{a}_0, \widehat{a}_1, \widehat{\eta})^T = (3.6597, -2.4131, 1)^T.$  assuming Weibull distributions.

Our simulation set-up is based on the previous solar lighting devices dataset. We consider $N=200$ and $L=14$ inspection times (in hundred hours), including the termination time, namely $IT = 1, 3, 5, 7, 8, 9, 10, 12, 13, 14, 15, 17, 19$ and $IT=20.$ The temperature is increased from $x_1$ to $x_2$ at time $\tau_1 = 1,$ and  Weibull distributions with constant shape parameter regardless the temperature and scale parameter depending on the temperature with true values $\boldsymbol{\theta}_0 = (a_0, a_1, \eta) = (3.6597, -2.4131, 1.4)$  are assumed for the lifetime of the devices. 
The shape parameter is set to a value different than 1 (corresponding to the exponential distribution) so that the model belongs to a  proper Weibull distribution, as originally assumed in the experiment.

Following the discussion in Section \ref{sec:Weibull},  the recorded failure counts are generated from a multinomial distribution with probability vector defined as in Equation (\ref{eq:thprob}).
Besides, to evaluate the robustness of the estimators and test statistics, we contaminate the probability of failure  for a  $\varepsilon \%$ of the  devices under test.
For  multinomial distributions,
outlying observations may appear in the form of early or late failures of a group of devices. 
Then, for contaminating the multinomial data, the  probability of failure at one cell is increased (representing early failures) so the devices would fail more likely at that modified cell. The rest of the cell probabilities  are decreased proportionally so the probability vector sum up to 1.
 
Figure \ref{fig: unrestricted} (right) presents the mean squared error (MSE) of the estimated model parameters over $R=10.000$ repetitions of the experiment (weighting each component by it true value for matching the different magnitudes of the parameter), computed as follows:
 $$MSE_{\beta} = \parallel \frac{\widehat{\boldsymbol{\theta}}^\beta - \boldsymbol{\theta}_0 }{\boldsymbol{\theta}_0}\parallel _2.$$
 As shown, contamination can heavily affect the estimation of the parameters, while the MSE of the MDPDE with moderate and large values of the tuning parameter $\beta$ remains sufficiently low even under high contamination.
Additionally, in many situations we may be interested in estimating 
the MTTF instead of the distribution parameters. 
An estimation of the mean lifetime under normal operating temperature $x_1=0.1$ can be straightforward obtained by plugging-in the MDPDEs, $\widehat{\boldsymbol{\theta}}^\beta$ on the Weibull MTTF as follows 
$$\widehat{MTTF}^\beta = \exp(\widehat{a}_0^\beta+\widehat{a}_1^\beta x_0) \cdot \Gamma\left(1+\frac{1}{\widehat{\eta}^\beta}\right).$$
The MSE produced by the different MDPDE when estimating the $MTTF$ for increasing contamination rates are presented in Figure \ref{fig: unrestricted} (left). The robustness characteristics of the estimator are directly transferred to the estimation of the lifetimes characteristics and so the MTTF based on the MLE is quite far from the true MTTF of the product in the presence of contamination. Conversely, in the absence of contamination, the estimates of the MTTF with different values of $\beta$ do not differ significantly.
\begin{figure}[H]
	\centering
	\begin{subfigure}{0.45\textwidth}
		\includegraphics[scale = 0.4]{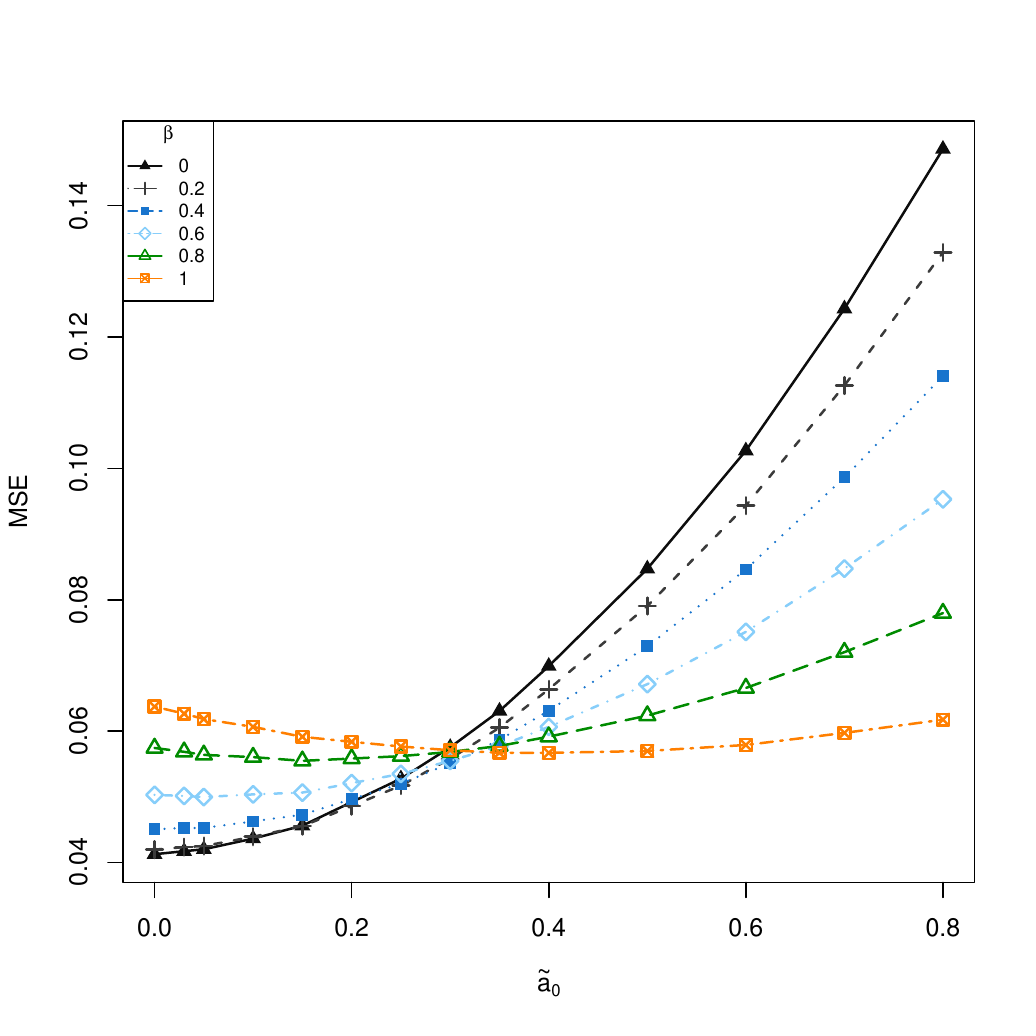}
		\caption{MSE of the model parameter estimates }
	\end{subfigure}
	\begin{subfigure}{0.45\textwidth}
		\includegraphics[scale = 0.4]{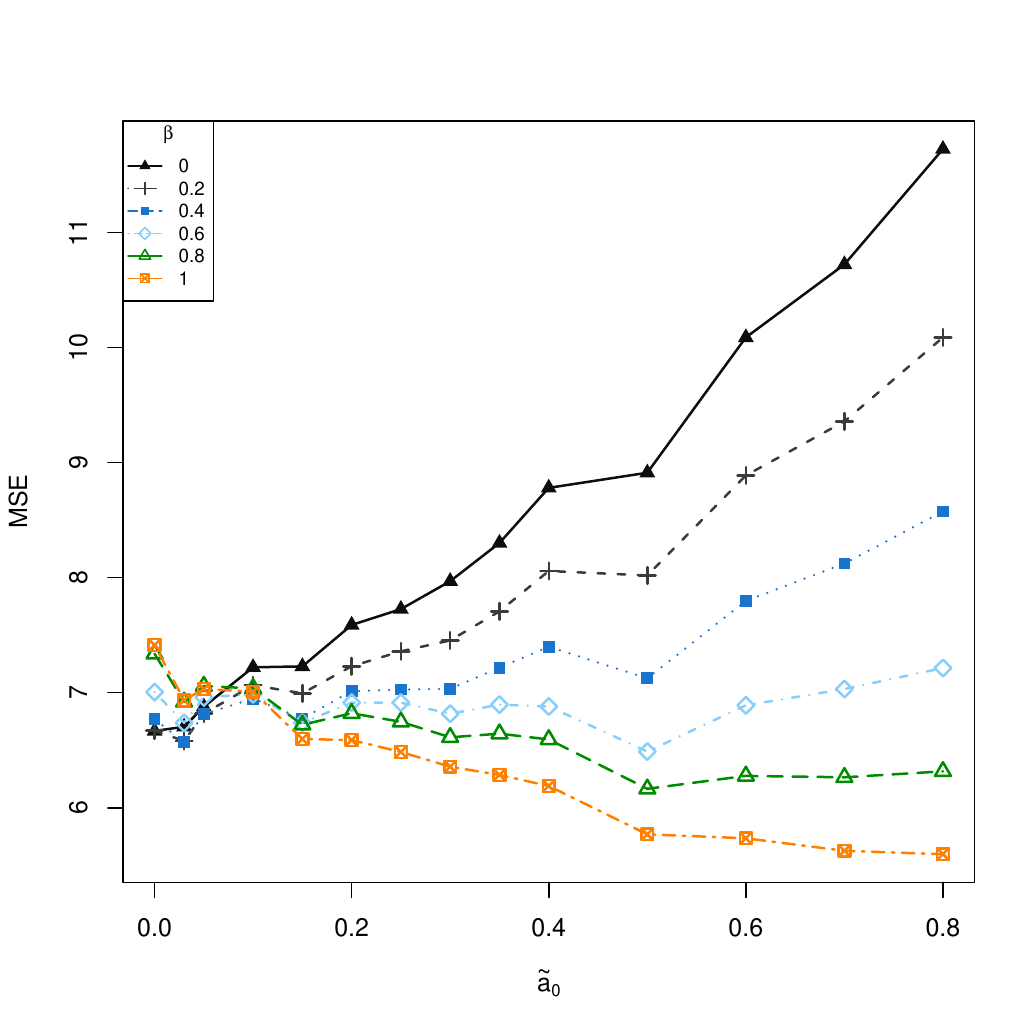}
		\caption{MSE of the estimated MTFF }
	\end{subfigure}
	\caption{Mean squared error (MSE) of the model estimates (left) and mean time to failure (right) for different values of the tuning parameter $\beta$ under increasing contamination rates }
	\label{fig: unrestricted}
\end{figure}

We now examine the behaviour of the restricted estimator under linear restrictions of the parameter space. In particular, we will define two different restricted spaces 
 of the form
\begin{equation*}
	\Theta_0 = \{\boldsymbol{\theta} : a_0 = 3.6597 \}, \hspace{0.3cm}\Theta_1 = \{\boldsymbol{\theta} : a_1 = -2.4131 \} \hspace{0.3cm} \text{and}\hspace{0.3cm} \Theta_2 = \{\boldsymbol{\theta} : \eta = 1.3 \}. 
\end{equation*}
These restrictions are defined by simple null hypothesis on each of the model parameters. Of course, as the true value of one of the three parameters is known, the estimation error would considerably decrease even under contamination. However, the robustness feature of the MDPDE is still remarkable and large values of the tuning parameter $\beta$ would produce more accurate estimation in the presence of contamination. Figure \ref{fig: restricted} plots the MSE of the estimates under increasing contamination rates. Here, the robustness of the estimators out stands under the  restricted space $\Theta_2$ (fixed shape parameter) which is a common assumption when assuming Weibull distribution for step-stress tests. As before, the lack of robustness is inherited to the estimation of any lifetime characteristic of interest, such as the MTTF and a similar performance than in the unrestricted estimation of the MTTF is obtained for the restricted case. We do not present the graphics of the MSE for brevity.

\begin{figure}[H]
	\centering
	\begin{subfigure}{0.3\textwidth}
		\includegraphics[scale = 0.29]{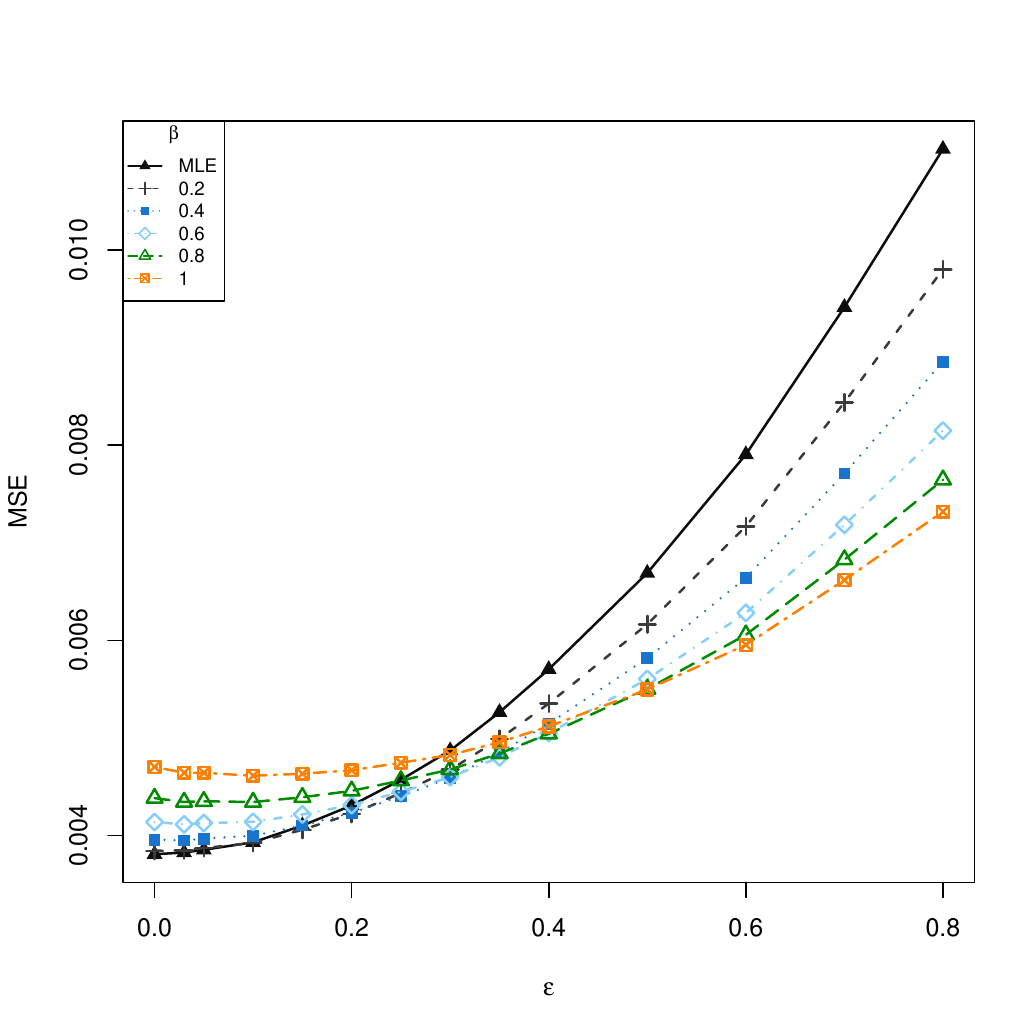}
		\caption{MSE of the RMDPDE under $\Theta_0$ }
	\end{subfigure}
	\begin{subfigure}{0.3\textwidth}
		\includegraphics[scale = 0.29]{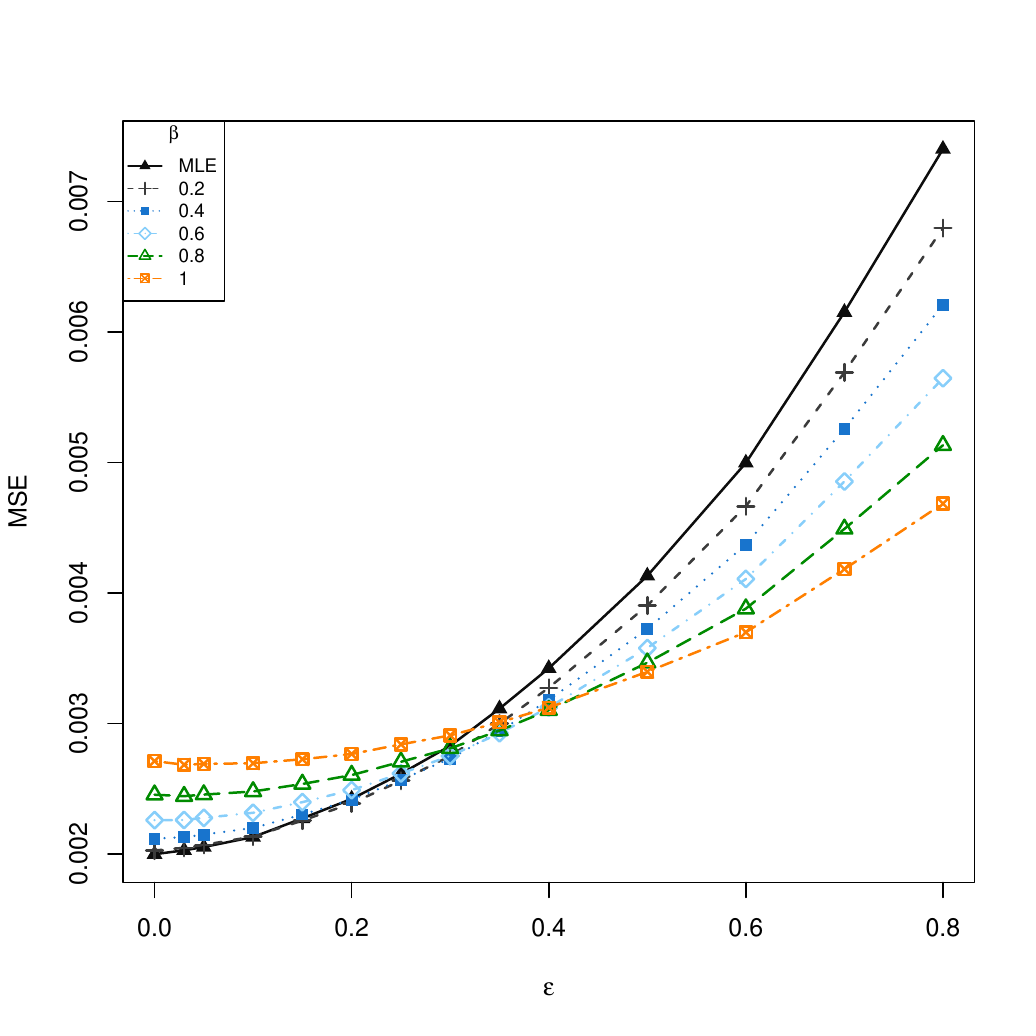}
		\caption{MSE of the RMDPDE under $\Theta_1$ }
	\end{subfigure}
\begin{subfigure}{0.3\textwidth}
	\includegraphics[scale = 0.29]{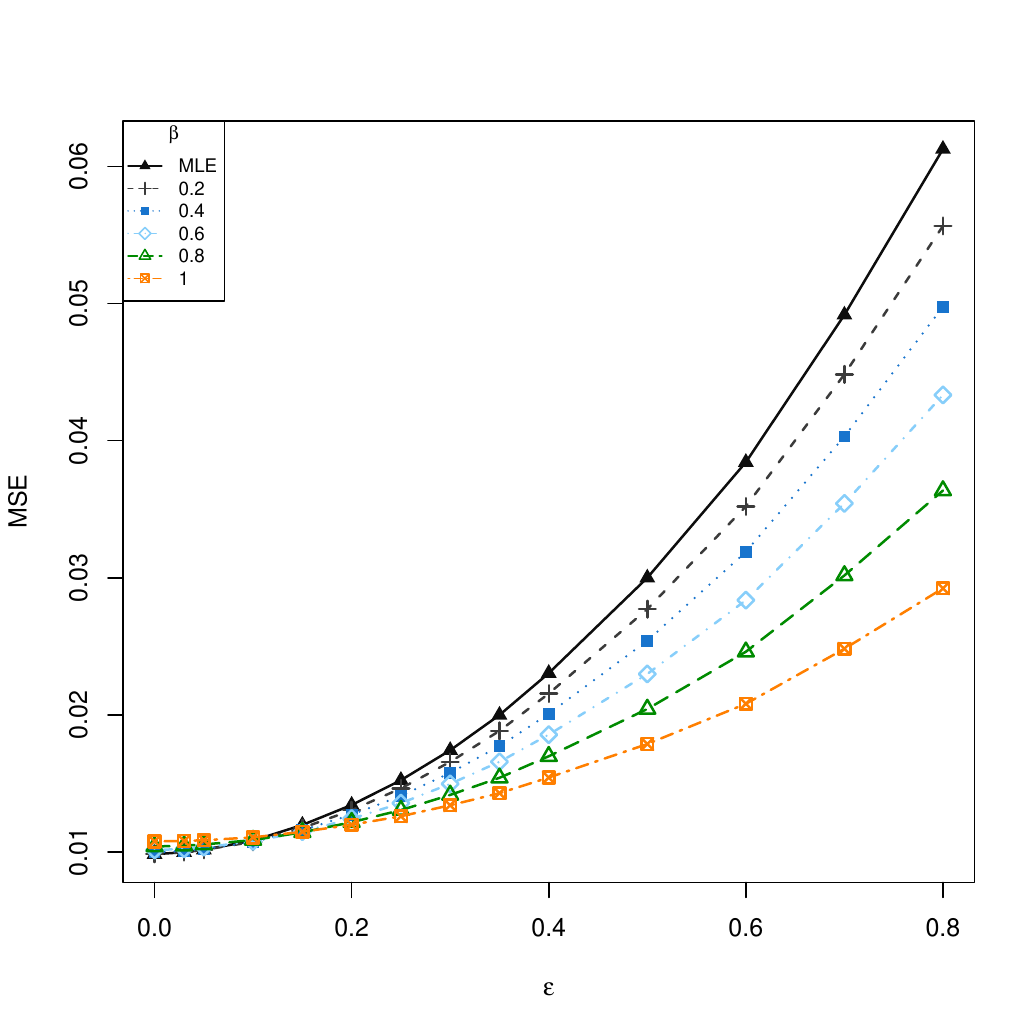}
	\caption{MSE of the RMDPDE under $\Theta_2$ }
\end{subfigure}
	\caption{Mean squared error (MSE) of the RMDPDE for different values of the tuning parameter $\beta$ under increasing contamination rates}
	\label{fig: restricted}
\end{figure}

Further, let us consider the testing problems with null hypothesis associated with the three restricted spaces are naturally defined as
\begin{equation*}
	(\operatorname{P}_0) \operatorname{H}_0  : a_0 = 3.6597 , \hspace{0.3cm} (\operatorname{P}_1)  \operatorname{H}_0  : a_1 = -2.4131 \hspace{0.3cm} \text{and} \hspace{0.3cm} (\operatorname{P}_2)  \operatorname{H}_0  : \eta = 1.3 . 
\end{equation*}
To evaluate the robustness of the Rao-type test statistics based on the RMDPDEs, Figure \ref{fig: raolevel} the empirical level of the Rao-type test statistics for the three testing problems $\operatorname{P}_0, \operatorname{P}_1$ and $\operatorname{P}_2$ with a significance level of $\alpha=0.05.$
\begin{figure}[H]
	\centering
	\begin{subfigure}{0.3\textwidth}
		\includegraphics[scale = 0.29]{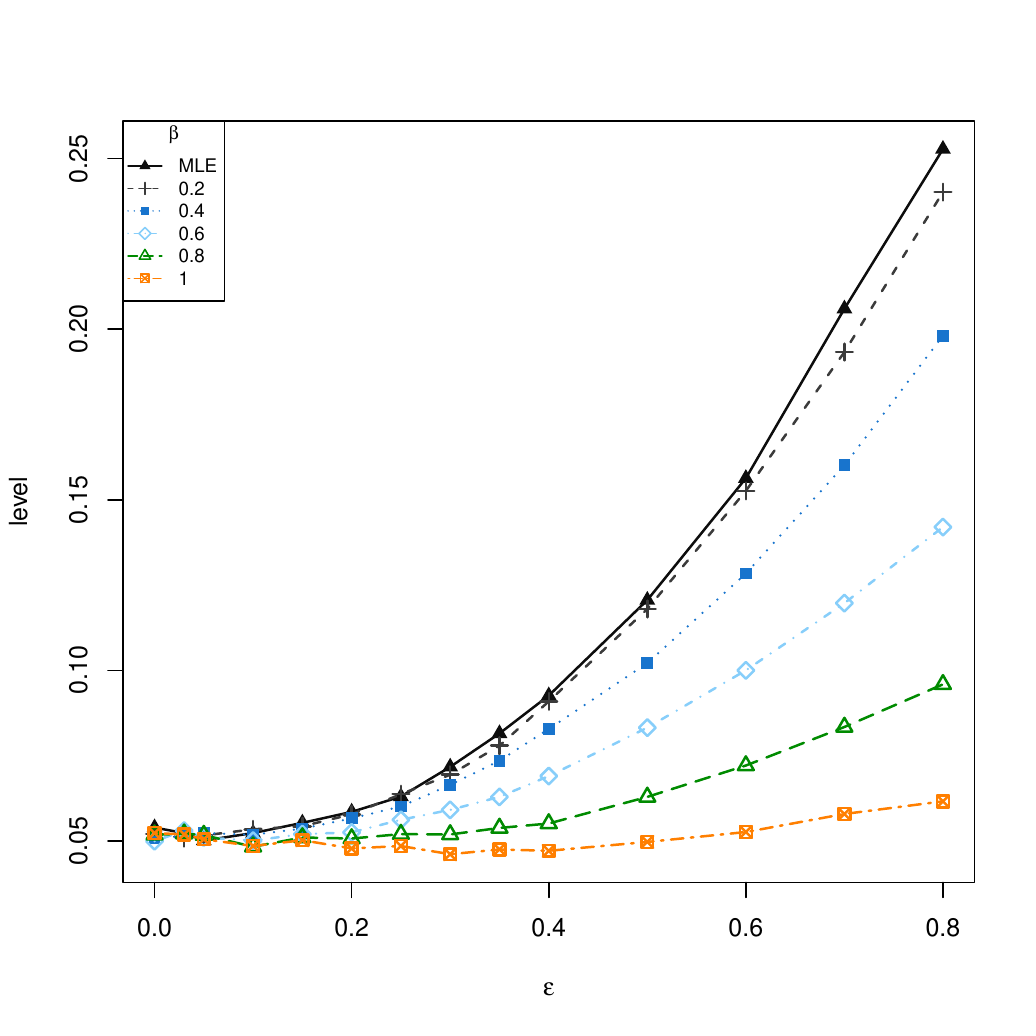}
		\caption{Empirical level of the Rao-type test statistic based on the RMDPDE for testing $\operatorname{P}_0$ }
	\end{subfigure}
	\begin{subfigure}{0.3\textwidth}
		\includegraphics[scale = 0.29]{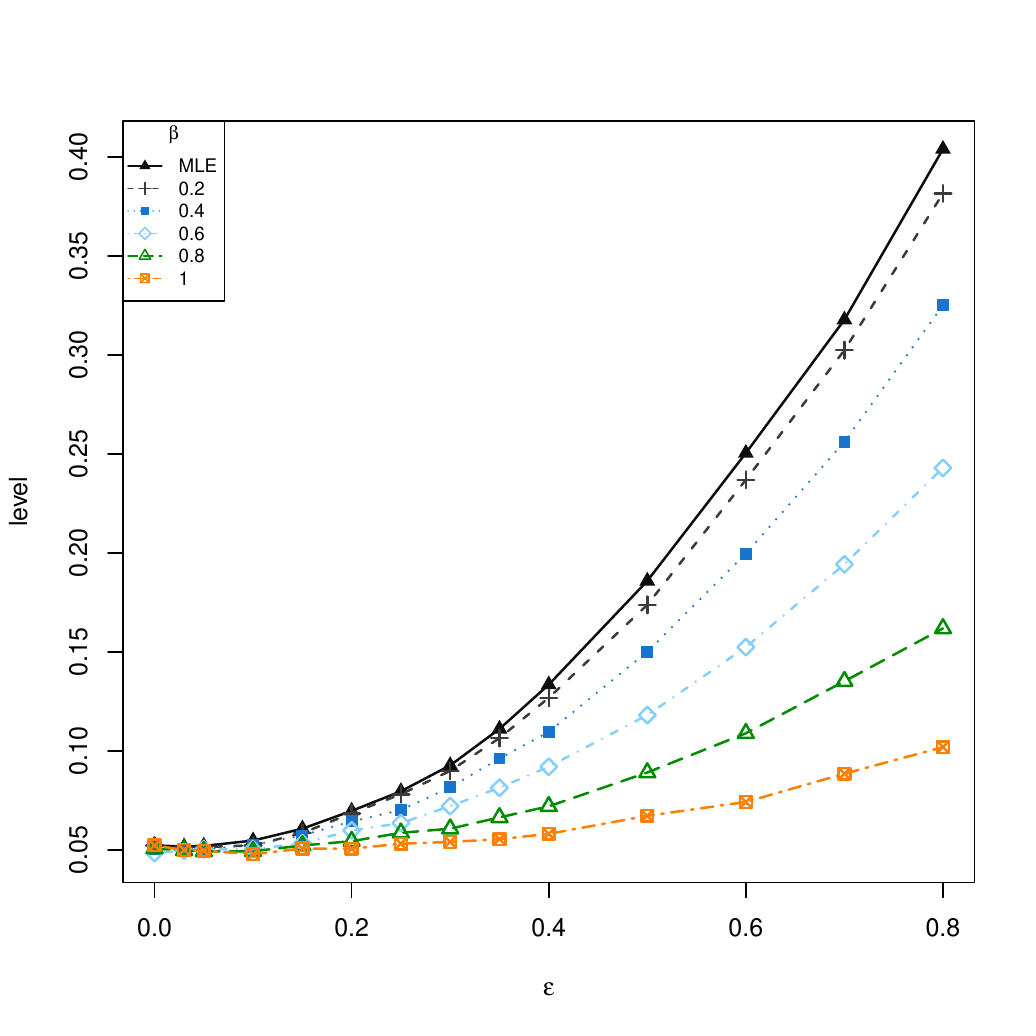}
		\caption{Empirical level of the Rao-type test statistic based on the RMDPDE for testing $\operatorname{P}_1$ }
	\end{subfigure}
	\begin{subfigure}{0.3\textwidth}
		\includegraphics[scale = 0.29]{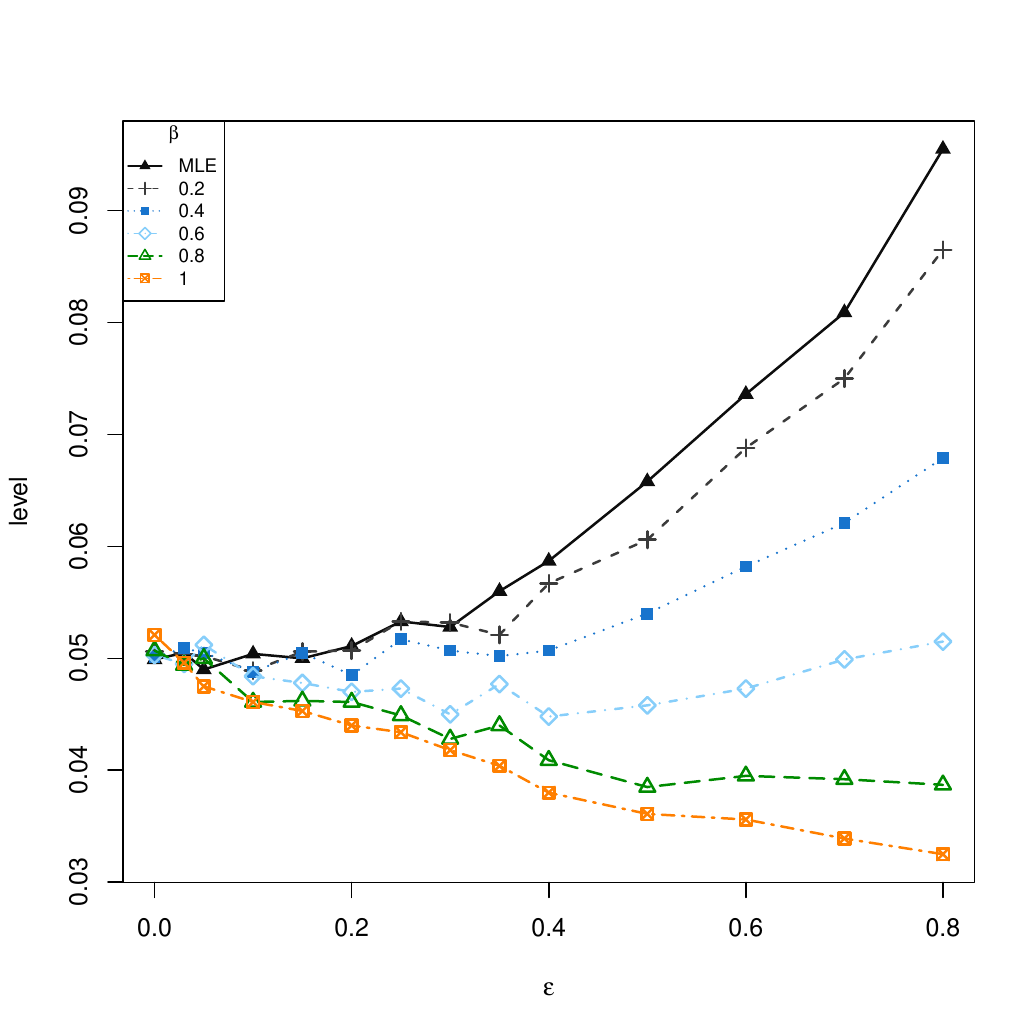}
		\caption{Empirical level of the Rao-type test statistic based on the RMDPDE for testing $\operatorname{P}_2$ }
	\end{subfigure}
	\caption{Empirical level of the Rao-type test statistics based on the RMDPDEs for different values of the tuning parameter $\beta$ under increasing contamination rates}
	\label{fig: raolevel}
\end{figure}
The lack of robustness of the MLE and RMDPDE with low values of $\beta$ is intensified in their associated Rao-type test statistics, which achieves empirical levels of $\alpha = 0.3$ under heavily contaminated scenarios. In contrast, Rao-type test statistics with moderate and large values of $\beta$ keep acceptable in those scenarios.
Of course, the larger the tuning parameter $\beta$ is, the more robust but less efficient the resulting estimator and test statistic will be. From our results, moderate values of $\beta$, over 0.4 could result in a great compromise between efficiency and robustness. This  suggestion for the $\beta$ value is in line with the DPD literature.

Finally, a natural hypothesis arising when assuming Weibull lifetimes is to test whether the assumption about the lifetime distribution can be simplified to an exponential distribution, as done in many analysis of the solar light data. In that case, the testing problem is of the form
\begin{equation*}
	 (\operatorname{P}_2^\ast)  \operatorname{H}_0  : \eta = 1. 
\end{equation*}
and the null hypothesis does not hold. Then,  the behaviour of the Rao-type test statistic for testing $(\operatorname{P}_2^\ast)$ illustrates the performance of the power function of the test. Figure \ref{fig: raopower} shows the empirical power of the  Rao-type test statistics based on the RMDPDEs for different values of the tuning parameter under increasing contamination rates. Under pure data, the power of the tests based on the RMLE ($\beta=0$) is higher than any other power based on the RMDPDEs with positive values of $\beta$, although both clearly reject the null hypothesis with a high p-value. However, when introducing contamination the performance of the Rao-test statistic based on the RMLE worsens considerably. However, the Rao-type test statistics based on robust RMDPDEs with moderate and large values of the tuning parameter keeps competitive even under heavily contaminated scenarios, showing the robustness advantage of the proposed methods.
\begin{figure}[H]
	\centering
	\includegraphics[scale = 0.29]{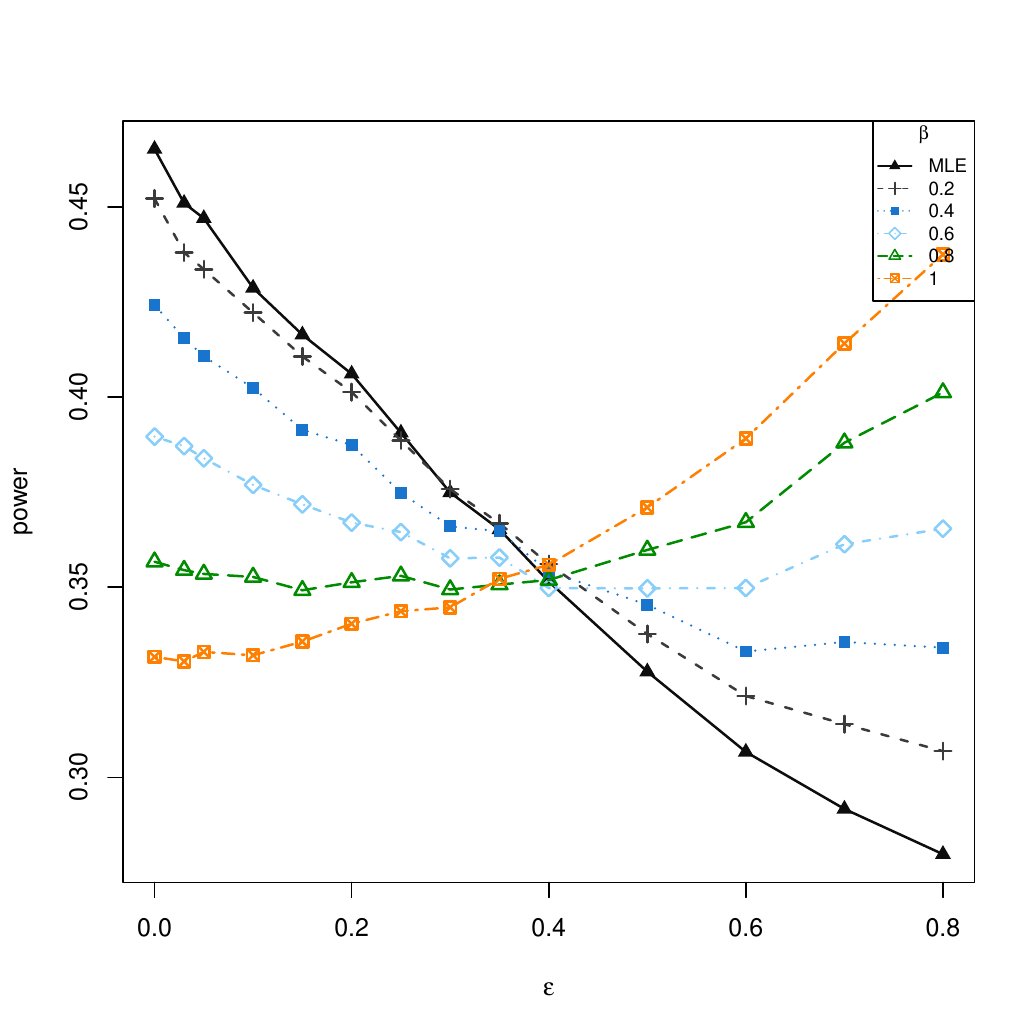}
	\caption{Empirical power of the Rao-type test statistics based on the RMDPDEs for different values of the tuning parameter $\beta$ under increasing contamination rates}
	\label{fig: raopower}
\end{figure}





\section{Concluding remarks}

Restricted estimators are necessary in some statistical applications such as the definition of Rao-type test statistics, which are widely used in hypothesis testing.
Inferential techniques based on maximum likelihood and minimum DPD have been recently developed for interval-censored step-stress ALT models.
The experimental observed data can be modelled according to a multinomial model, where the multinomial events are defined by the different intervals inspected. In those models, data contamination and outlying failures counts can heavily influence the model parameter estimates and so robust estimators are suitable for inference.
In this paper we have developed robust RMDPDEs for interval-censored step-stress ALTs under Weibull lifetime distributions under restricted parameter spaces with general equality constraints. The RMDPDEs are proved theoretically and empirically to be consistent and robust against data contamination.
Moreover, based on the robust RMDPDEs we have presented a robust generalization of the Rao test statistic for testing general composite null hypothesis.
The proposed family of restricted estimators includes the restricted MLE as a special case and so, the classical Rao test for the step-stress ALT model under interval-censorship is also examined.
Finally, the advantage of the robust restricted estimators and  their applicability for hypothesis testing in practice have been illustrated with a simulation set-up based on a real experiment conducted for analyzing the reliability of solar lighting devices.

\bibliography{bibliography}

\end{document}